\def\BibTeX{{\rm B\kern-.05em{\sc i\kern-.025em b}\kern-.08em
    T\kern-.1667em\lower.7ex\hbox{E}\kern-.125emX}}
\newcommand{\Paragraph}[1]{~\vspace*{-0.9\baselineskip}\\{\bf #1}}
\newcommand{\nop}[1]{}
\newtheorem{definition}{Definition}
\newtheorem{example}{Example}
\newtheorem{claim}{Claim}
\newcommand{\optionshow}[2]{#2}
\newcommand{\picfolder}{./}
\newcommand{\myhl}[1]{#1}
\begin{document}


\title{GSI: GPU-friendly Subgraph Isomorphism}

\nop{
\author{\IEEEauthorblockN{Li Zeng}
\IEEEauthorblockA{\textit{Peking University} \\
\textit{Peking University}\\
Beijing, China \\
li.zeng@pku.edu.cn }
\and
\IEEEauthorblockN{Lei Zou}
\IEEEauthorblockA{\textit{Peking University} \\
\textit{Peking University}\\
Beijing, China \\
zoulei@pku.edu.cn }
\and
\IEEEauthorblockN{M. Tamer {\"O}zsu}
\IEEEauthorblockA{\textit{University of Waterloo} \\
\textit{University of Waterloo}\\
Canada \\
tamer.ozsu@uwaterloo.ca }
\and
\IEEEauthorblockN{Lin Hu}
\IEEEauthorblockA{\textit{Peking University} \\
\textit{Peking University}\\
Beijing, China \\
hulin@pku.edu.cn }
\and
\IEEEauthorblockN{Fan Zhang}
\IEEEauthorblockA{\textit{Peking University} \\
\textit{Peking University}\\
Beijing, China \\
zhangfanau@pku.edu.cn }
}
}

\author{%
	{ {Li Zeng${^\dag}$}, Lei Zou{${^\dag}$},  M. Tamer {\"O}zsu{${^\ddag}$},  Lin Hu{${^{\dag}}$}, Fan Zhang{${^{\dag}}$}}%
	\\
	\fontsize{10}{10}\selectfont\itshape $~^{\dag}$Peking University, China;
	\fontsize{10}{10}\selectfont\itshape $~^{\ddag}$University of Waterloo,
	Canada;
	\\
	\fontsize{10}{10}\selectfont\ttfamily\upshape $~^{\dag}$$\{$li.zeng,zoulei,hulin,zhangfanau$\}$@pku.edu.cn, $~^{\ddag}$tamer.ozsu@uwaterloo.ca
	\\}


\maketitle

\begin{abstract}
Subgraph isomorphism is a well-known \emph{NP-hard} problem that is widely used in many applications, such as social network analysis and querying over the knowledge graph. 
Due to the inherent hardness, its performance is often a bottleneck in various real-world applications. 
We address this by designing an efficient subgraph isomorphism algorithm leveraging features of GPU architecture, such as massive parallelism and memory hierarchy.
Existing GPU-based solutions adopt two-step output scheme, performing the same join twice in order to write intermediate results concurrently. 
They also lack GPU architecture-aware optimizations that allow scaling to large graphs. 
In this paper, we propose a \emph{G}PU-friendly \emph{s}ubgraph \emph{i}somorphism algorithm, \emph{GSI}. 
Different from existing edge join-based GPU solutions, we propose a \emph{Prealloc-Combine} strategy based on the vertex-oriented framework, which avoids joining-twice in existing solutions. 
Also, a GPU-friendly data structure (called  \emph{PCSR}) is proposed to represent an edge-labeled graph. 
Extensive experiments on both synthetic and real graphs show that GSI outperforms the state-of-the-art algorithms by up to several orders of magnitude and has good scalability with graph size scaling to hundreds of millions of edges.

\nop{
to make breakthrough in three major aspects: \emph{Reducing Work Complexity}, \emph{Memory Latency} and \emph{Load Imbalance}.
The main optimizations include \emph{PCSR} structure for fast $N(v,l)$ extraction and \emph{Prealloc-Combine} strategy, a better substitution of traditional \emph{two-step output scheme}.
We also propose other optimizations such as signature-based filter, batch implementation of set operations, write cache and so on.
Extensive experiments on both synthetic and real graphs show that \emph{GSI} outperforms the state-of-the-art algorithms by up to several orders of magnitude and has a good scalability with varying sizes of data graphs. }
\end{abstract}

\begin{IEEEkeywords}
    GSI, GPU, Subgraph Isomorphism
\end{IEEEkeywords}




\section{Introduction}\label{sec:introduction}

Graphs have become increasingly important in modeling complicated structures and schema-less data such as chemical compounds, social networks and RDF datasets. 
The growing popularity of graphs has generated many interesting data management problems. 
Among these, subgraph search is a fundamental problem: how to efficiently enumerate all subgraph isomorphism-based matches of a query graph over a data graph.
This is the focus of this work. 
Subgraph search has many applications, e.g., chemical compound search \cite{yan2004graph} and search over a knowledge graph \cite{perez2009semantics,lassila1998resource, DBLP:journals/fcsc/ZengZ18}.
A running example (query graph $Q$ and data graph $G$) is given in Figure \ref{fig:example} and Figure \ref{fig:example}(c) illustrates the matches of $Q$ over $G$. 

Subgraph isomorphism is a well-known \textit{NP-hard} problem \cite{DBLP:books/fm/GareyJ79} and most solutions follow some form of tree search with backtracking \cite{DBLP:journals/ijprai/ConteFSV04}.
\nop{
Most subgraph isomorphism algorithms follow some form of tree search with backtracking strategy \cite{DBLP:journals/ijprai/ConteFSV04}, a depth-first search method that finds solutions in a memory efficient manner. Generally, during each step a partial match (initially empty) is iteratively expanded by adding to it new pairs of matched nodes; the pair is chosen using some necessary conditions that ensure the subgraph isomorphism constraints with respect to the nodes mapped so far, and usually also using some heuristics to prune as early as possible unfruitful search paths. Eventually, either the algorithm finds a complete match, or it reaches a state where the current partial mapping cannot be further expanded because of the matching constraints. 
Then, the algorithm will backtrack to probe other search paths.
}
Figure \ref{fig:tree} illustrates the search space for $Q$ over  $G$ of Figure \ref{fig:example}. 
Although existing algorithms propose many pruning techniques to filter out unpromising search paths \cite{DBLP:journals/pvldb/KimSHHC15,CFL-Match}, due to the inherent NP-hardness, the search space is still exponential. 
Therefore, scaling to large graphs with millions of nodes is challenging.
One way to address this challenge is to employ hardware assist.

In this paper, we propose an efficient GPU-based subgraph isomorphism algorithm to speed up subgraph search by leveraging massively parallel processing capability of GPU to explore the search space in parallel. 
Note that our proposed accelerative solution is orthogonal to pruning techniques in existing algorithms \cite{DBLP:journals/pvldb/QiaoZC17, VF3, DBLP:conf/sigmod/HanLL13, DBLP:journals/pvldb/KimSHHC15, DBLP:journals/pvldb/RenW15, CFL-Match, DBLP:journals/pvldb/ShangZLY08, DBLP:journals/pvldb/ZhaoH10}. 

\vspace{-0.1in}
\begin{figure}[htbp]
	\centering
	\includegraphics[width=8cm]  {\picfolder 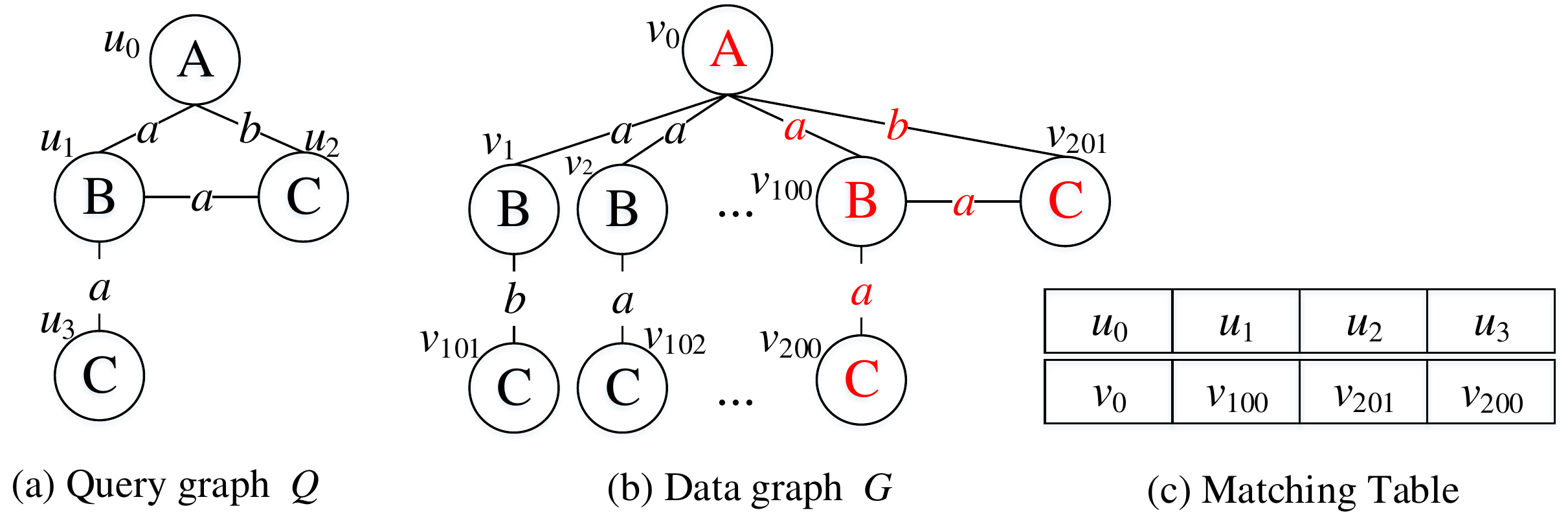}        	
	\caption{An example of Query Graph and Data Graph}      	
	\label{fig:example}  
    \vspace{-0.1in}     	
\end{figure}
\vspace{-0.2in}

\begin{figure}[htbp]
	\centering
	\includegraphics[width=8cm]  {\picfolder 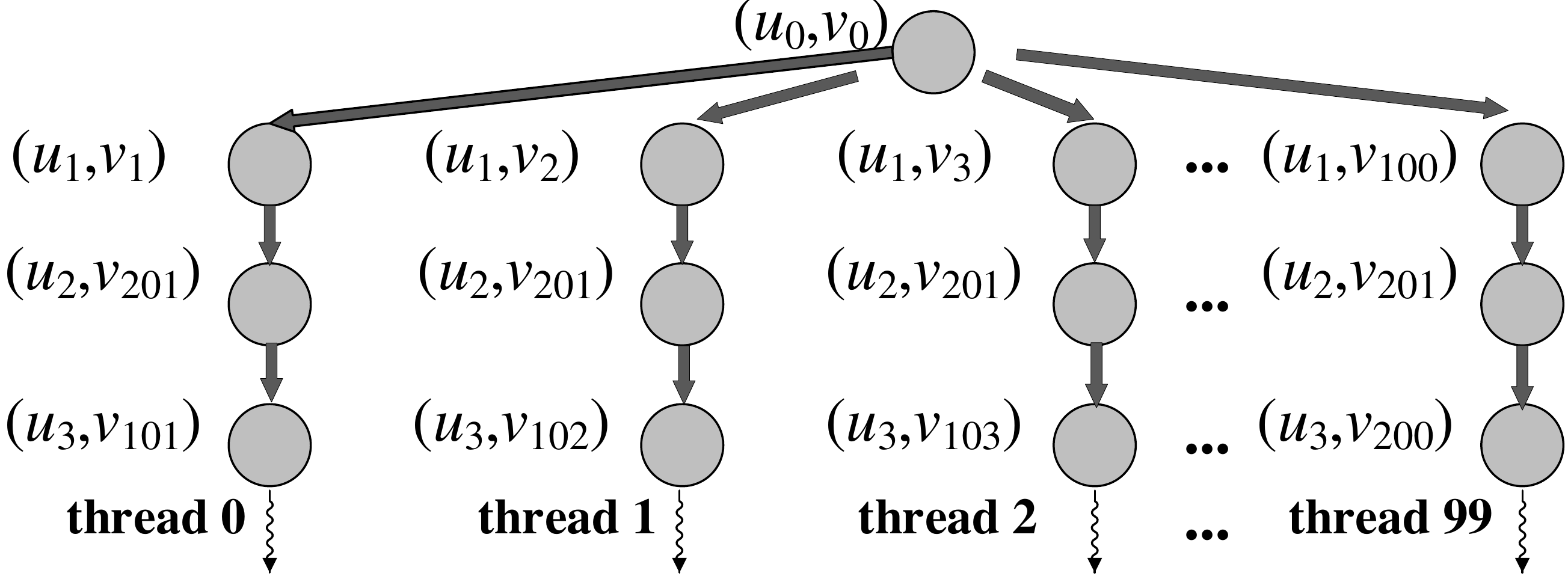}        	
    \caption{An example of searching tree of $Q$ in $G$}
	\label{fig:tree}     
    \vspace{-0.1in}     	
\end{figure}
\vspace{0.05in}

\begin{figure*}[htbp]
	\centering
	\includegraphics[width=16cm]  {\picfolder 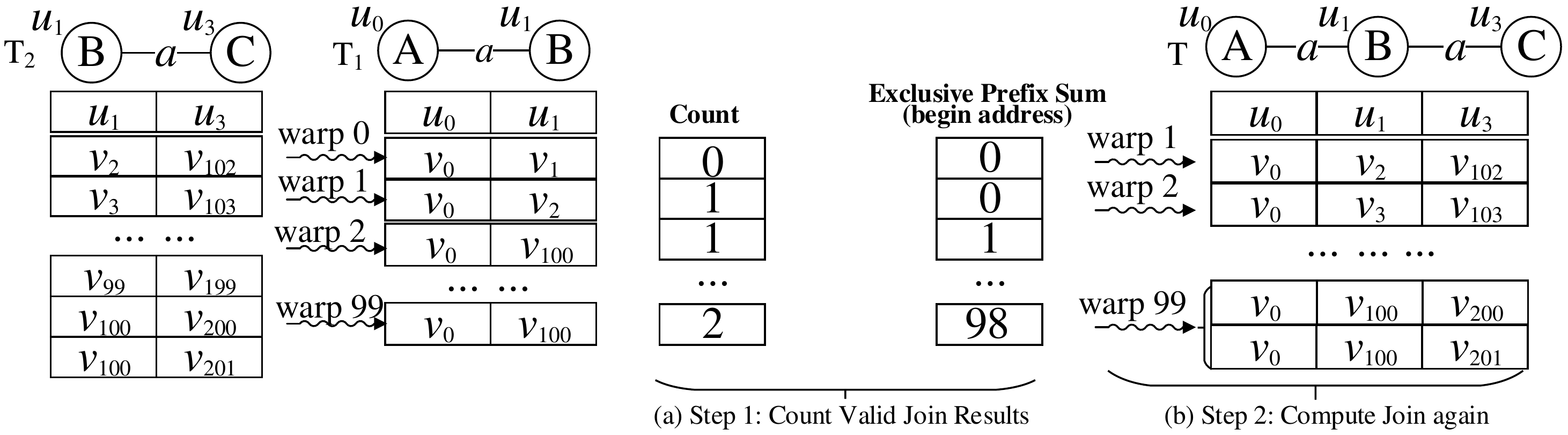}        	
    \caption{An example of ``two-step output scheme''}
	\label{fig:2step}     
    \vspace{-0.1in}     	
\end{figure*}
\vspace{-0.2in}

\nop{
Although GPU has the power of massive parallelism, badly designed GPU algorithms may be even  slower than CPU counterparts due to thread underutilization, higher memory latency and irregular access patterns. 
As mentioned above, the backtracking paradigm finds efficient solutions for CPU-based subgraph isomorphism algorithms, but it cannot be efficiently adapted to GPUs due to the \emph{warp divergence} and \emph{uncoalesced memory access} \cite{DBLP:conf/europar/JenkinsAOCS11,GpSM}. 
GPU operations are based on \emph{warps}, which are groups of threads to be executed in single-instruction-multiple-data (SIMD) fashion. 
Assume that each thread handles one path of the search tree in Figure \ref{fig:tree}. 
Even if some branches end earlier, they still have to wait for other threads in the same warp to finish. 
That is called warp divergence, which is the straggler problem in parallel computing.
Furthermore, if the threads within one warp access scattered memory addresses, called uncoalesced memory access (an example is shown in Figure \ref{fig:uncoalesce}), this leads to more memory transactions and large memory latency. 
In the backtracking paradigm, each thread in one warp always needs to access neighbors of different vertices. 
These are definitely uncoalesced memory accesses. 
}



To the best of our knowledge, two state-of-the-art GPU-based subgraph isomorphism algorithms exist in the literature: GpSM \cite{GpSM} and GunrockSM \cite{GunrockSM}. 
In order to avoid the bottlenecks of backtracking \cite{DBLP:conf/europar/JenkinsAOCS11}, they both adopt the breadth-first exploration strategy. 
They perform edge-oriented computation, where they collect candidates for each edge of $Q$ and join them to find all matches.
The edge-based join strategy suffers from high volume of work when implemented on GPU. 
A key issue is how to write join results to GPU memory in a massively parallel manner. 
GpSM and GunrockSM  employ the ``two-step output scheme'' \cite{DBLP:conf/IEEEpact/HeFLGW08}, as illustrated in Example \ref{example:1}.
\nop{
In the first round, the number of join results  is counted for each thread, based on which, memory can be allocated for join result addresses of each thread. 
Then, in the second round, each thread conducts the same join again and writes the results to the corresponding addresses. 
}
\begin{example}\label{example:1} 
Consider $Q$ and $G$ in Figure \ref{fig:example}. 
Tables $T_1$ and $T_2$ in Figure \ref{fig:2step} show the matching edges of $\overline{u_0u_1}$ and  $\overline{u_1u_3}$, respectively. 
In order to obtain matches of the subgraph induced by vertices $u_0$, $u_1$ and $u_3$, GpSM performs the edge join $T_1 \Join T_2$.  
Assume that each processor handles one row in $T_1$ for joining. 
Writing the join results to memory in parallel may lead to a conflict, since different processors may write to the same address. 

To avoid this, the naive solution is locking, but that reduces the parallelism. 
GpSM and GunrockSM  use ``two-step output scheme'' instead. 
In the first step, each processor joins one row in $T_1$ with the entire table $T_2$ and counts valid matches (Figure \ref{fig:2step}(a)). 
Then, based on the prefix-sum, the output addresses for each processor are calculated.
In the second step, each processor performs the same join again and writes the join results to the calculated memory address in parallel (Figure \ref{fig:2step}(b)). 
\end{example}
The two-step output scheme performs the same join twice, doubling the amount of work,  and thus suffers performance issues when GPU is short of threads on large graphs.
In order to avoid joining twice, we propose a \emph{Prealloc-Combine} approach, which is based on \emph{joining candidate vertices } instead of edges.
During each iteration, we always join the intermediate results with a  candidate vertex set. 
To write the join results to memory in parallel, we pre-allocate enough memory space for each row of $M$  and perform the vertex join only once. 
We use vertex rather than edge as the basic join unit, because we cannot estimate memory space for edge join results, which is easy for vertex join. 
More details are given in Section \ref{sec:basicjoin}. 

Vertex join has two important primitive operations: accessing one vertex's neighbors and set operations. 
To gain high performance, we propose an efficient data structure (called \emph{PCSR}, in Section \ref{sec:graphds}) to retrieve a vertex's neighbors, especially for an edge-labeled graph. 
Also, adapting to GPU architecture, we design an efficient GPU-based algorithm for set operations. 

Putting all these together, we obtain an efficient \emph{G}PU-friendly \emph{s}ubgraph \emph{i}somorphism solution (called \emph{GSI}). 
Our primary contributions are the following:
\begin{itemize}
    \item We propose an efficient data structure (PCSR) to represent edge-labeled graphs, which helps reduce memory latency.
    \item Using \emph{vertex-oriented} join, we propose Prealloc-Combine strategy instead of two-step output scheme, which is significantly more performant.
    \item Leveraging GPU features, we discuss efficient implementation of set operations, as well as optimizations including load balance and duplicate removal.
    \item Experiments on both synthetic and real large graph datasets show that GSI outperforms the state-of-the-art approaches (both CPU-based and GPU-based) by several orders of magnitude. 
        Also, GSI has good scalability with graph size scaling to hundreds of millions of edges.
    
    \nop{
    \item \emph{Reducing Work Complexity}. Instead of two step output scheme, the \emph{Prealloc-Combine} strategy reduces work complexity. In addition, a duplication removal  heuristic  is proposed to reduce redundant computing.
    \item \emph{Shortening Memory Latency}. An efficient \emph{PCSR} structure is designed to speed up $N(v,l)$ extraction (extracting the neighbor set of vertex $v$ with adjacent edge label $l$). Shared memory is fully utilized to reduce latency of global memory access, e.g., batch implementation of set operations and write cache.
    \item \emph{Addressing Load Imbalance}. The \emph{4-layer balance scheme} is used to balance the workload in a finer grain.
    \item \emph{Conducting Extensive Experiments.} }

\end{itemize}

\optionshow{}{
The rest of the paper is organized as follows. Section \ref{sec:preliminary} gives formal definitions of subgraph isomorphism and background knowledge.
In Section \ref{sec:algorithm}, we introduce the framework of \emph{GSI}, which consists of filtering phase and joining phase. 
The optimizations of \emph{GSI} are discussed in Section \ref{sec:optimize}. 
Our method can also support other graph pattern semantics, such as homomophism and edge isomorphism;  and can process multi-labeled graphs as well.
Section \ref{sec:extension} gives the details of these extensions.
Section \ref{sec:experiment} shows all experiment results and Related works are presented in Section \ref{sec:related}.
Finally, Section \ref{sec:conclusion} concludes the paper.
}

\section{Preliminaries}\label{sec:preliminary}

\optionshow{}{
\begin{table}[H]
	\small
	\centering
	\caption{Notations}
	\begin{tabular}{|c||p{6cm}|}
		\hline
		$G,Q$& Data graph and query graph, respectively \\
		\hline
		$v,u$& Vertex in $G$ and $Q$, respectively\\
		\hline
		$S(v),S(u)$& Encoding of vertex $v$ or $u$\\
		\hline
		$N(v),N(u)$& All neighbors of vertex $v$ or $u$ \\
		\hline
		$N(v,l)$& Neighbors of vertex $v$ with edge label $l$ \\
        \hline
        $freq(l)$& Frequency of label $l$ in $G$ \\
		\hline
		$C(u)$& The candidate set of query vertex $u$ in $G$ \\
		\hline
		$M$,$M^{\prime}$& The old and new intermediate result table, each row represents a partial answer, each column correspondings to a query variable\\
		\hline
        $num(L)$& The number of (valid) elements in $L$ \\
        \hline
        $D=P(G,l)$& Edge label $l$-partitioned subgraph of $G$ \\
        \hline
	\end{tabular}
	\label{table:notations}
    \vspace{-0.15in}     	
\end{table}
}


\optionshow{}{
In this section, we formally define our problem and review the terminology used throughout this paper. 
We also introduce GPU background and discuss the challenges for GPU-based subgraph isomorphism computation. 
Table \ref{table:notations} lists the frequently-used notations in this paper.
}

\subsection{Problem Definition}\label{sec:definition}

\begin{definition}\label{def:graph} \textbf{(Graph)}
	A graph is denoted as $G=\{V,$ $E,$ $L_{V}$ $L_{E} \}$, where $V$ is a set of vertices; $E \subseteq V \times V$ is a set of undirected edges in $G$; $L_{V}$ and $L_{E}$ are two functions that assign labels for each vertex in $V(G)$ and each edge in $E(G)$, respectively. 
\end{definition}

\nop{
\begin{definition}\label{def:subgraph} \textbf{(Subgraph)}
	Given a graph $G=\{V,$ $E,$ $L_{V},$ $L_{E}\}$, a subgraph of $G$ is denoted as $G^{\prime}=\{V^{\prime},$ $E^{\prime},$ $L^{\prime}_{V},$ $L^{\prime}_{E}\}$, where vertex sets $V^{\prime}$ and edge sets $E^{\prime}$ in $G^{\prime}$ are subsets of $V$ and $E$, respectively, denoted as $V^{\prime} \subseteq V$ and $E^{\prime} \subseteq E$. Furthermore, for vertex and edge label functions, $L^{\prime}_{V^{\prime}} \subseteq L_V$ and $L^{\prime}_{E^{\prime}} \subseteq L_E$. 
\end{definition}
}

\begin{definition}\label{def:graphisomorphism} \textbf{(Graph Isomorphism)}
    Given two graphs $H$ and $G$, $H$ is \emph{isomorphic} to $G$ if and only if there exists a bijective function $f$ between the vertex sets of $G$ and $H$ (denoted as $f:V(H)\longrightarrow V(G) )$, such that 
	\begin{itemize}
	\item $\forall u \in V(H), f(u) \in V(G) $ and $L_{V}(u)=L_{V}(f(u))$, where $V(H)$ and $V(G)$ denote all vertices in graphs $H$ and $G$, respectively. 
	\item $\forall  \overline{u_1u_2} \in E(H)$,  $\overline{f(u_1)f(u_2)} \in E(G)$  and $L_{E}(\overline{f(u_1)f(u_2)})$ $=L_{E}(\overline{u_1u_2})$, where $E(H)$ and $E(G)$ denote all edges in graphs $H$ and $G$, respectively. 	
	\item $\forall  \overline{u_1u_2} \in E(G)$,  $\overline{f^{-1}(u_1)f^{-1}(u_2)} \in E(H)$  and $L_{E}(\overline{f^{-1}(u_1)f^{-1}(u_2)})$ $=L_{E}(\overline{u_1u_2})$
	\end{itemize}
\end{definition}

\begin{definition}\label{def:subgraphisomorphism}\textbf{(Subgraph Isomorphism Search)}
    Given query graph $Q$ and data graph $G$, the subgraph isomorphism search problem is to find all \emph{subgraph}s $G^{\prime}$ of $G$ such that  $G^{\prime}$ is \emph{isomorphic} to $Q$. $G^{\prime}$ is called a \emph{match} of $Q$.
\end{definition}

\nop{
Following Definition \ref{def:graph}, \ref{def:subgraph}, \ref{def:graphisomorphism} and \ref{def:subgraphisomorphism}, the problem studied in this paper is defined in Definition \ref{def:problem}.
Figure \ref{fig:example} shows an example of query graph $Q$ and data graph $G$.
The result of finding $Q$ in $G$ is listed in Table \ref{table:result}, i.e. a single matching.
Table \ref{table:notations} shows some frequently-used notations.
}

This paper proposes an efficient GPU-based solution for subgraph isomorphism search. 
Without loss of generality, we assume $Q$ is connected and use $v$, $u$, $N(v)$, $N(v,l)$, $num(L)$, and $|A|$  to denote a data vertex, a query vertex, all neighbors of $v$, $\{v^{\prime}| \overline{vv'}  \in E(G)  \wedge L_E (\overline{vv^{\prime}} ) = l\}$, the number of currently valid elements in set $L$, and the size of set $A$, respectively.

\nop{
\myhl{
Note that our method can also support other graph pattern semantics, such as homomophism and edge isomorphism;  and can process multi-labeled graphs as well. We give the details in the full version of this work \cite{fullVersion} due to the space limit. 
}
}

\nop{
\myhl{
Our work can also be extended to support homomorphism and edge isomorphism, and can process multi-labeled graphs.
The details are in our full paper \cite{fullVersion}.
}
	Without loss of generality, we assume $Q$ is connected; otherwise, we can regard each connected component of $Q$ as a separate query and execute them individually.
	Unless otherwise specified, we use $v$, $u$, $N(v)$, and $N(v,l)$ to denote a data vertex, a query vertex, all neighbors of $v$, and $\{v^{\prime}| \overline{vv'}  \in E(G)  \wedge L_E (\overline{vv^{\prime}} ) = l\}$, respectively.
}
\nop{
Without loss of generality, we make the following assumptions for ease of presentation:
\begin{itemize}
\item Graphs are undirected. However, our method GSI can be easily extended to process directed graphs.
\item Each query graph $Q$ is connected; otherwise, we can regard each connected component of $Q$ as a separate query and perform them one by one.
\end{itemize}
}

\nop{

As is known to all, subgraph isomorphism-based subgraph search (\emph{subgraph search} for short) is a classical NP-hard problem. 
The hardness of subgraph search lies in the enumeration and validation of all possible matching subgraphs in $G$.
Though equipped with multiple cores and SIMD mechanism, modern CPU is still limited in the power of processing large graphs.
In this work, our goal is to speed up \emph{subgraph search} on large graphs by leveraging the massive parallelism ability of GPU.  

\begin{definition}\label{def:problem}\textbf{Problem Statement.} 
    Given a query graph $Q$ and a data graph $G$, the problem is to design an efficient GPU-based solution to find all subgraph \emph{matching}s of query $Q$ over data graph $G$.
\end{definition}
}

\nop{
\begin{figure}[htbp]
	\centering
	\includegraphics[width=8cm]  {images/example.pdf}        	
	\caption{An example of Query Graph and Data Graph}      	
	\label{fig:example}  
	\vspace{-0.2in}     	
\end{figure}
\begin{table}[H]
	\centering
	\caption{Matchings in Example \ref{fig:example}}
	\begin{tabular}{|c|c|c|c|}
		\hline
		$u_{0}$ & $u_{1}$ & $u_{2}$ & $u_{3}$ \\
		\hline 
		$v_{0}$ & $v_{100}$ & $v_{201}$ & $v_{200}$ \\
		\hline
	\end{tabular}
	\label{table:result}
	\vspace{-0.2in}     	
\end{table}
}

\subsection{GPU Architecture} \label{sec:arch}

GPU is a discrete device that contains dozens of streaming multiprocessors (SM) and its own memory hierarchy.
Each SM contains hundreds of cores and CUDA (Compute Unified Device Architecture) programming model provides several thread mapping abstractions, i.e., a thread hierarchy. 

\Paragraph{Thread Hierarchy}.
Each core is mapped to a \emph{thread} and a \emph{warp} contains 32 consecutive threads running in Single Instruction Multiple Data (SIMD) fashion. 
When a warp executes a branch, it has to wait though only a portion of the threads take a particular branch; this is termed as \emph{warp divergence}.
A \emph{block} consists of several consecutive warps and each block resides in one SM.
Each process launched on GPU (called a \emph{kernel function}) occupies a unique \emph{grid}, which includes several equal-sized blocks.

\Paragraph{Memory Hierarchy}.
In Figure \ref{fig:memory}, \emph{global memory} is the slowest and largest layer.
Each SM owns a private programmable high-speed cache, \emph{shared memory}, that is accessible by all threads in one block. 
Although the size of shared memory is quite limited (Taking Titan XP as example, only 48KB per SM), accessing shared memory is nearly as fast as thread-private registers. 
Access to global memory is done through 128B-size transactions and the latency of each transaction is hundreds of times longer than access to shared memory. 
If threads in a warp access the global memory in a consecutive and aligned manner, fewer transactions are needed. 
For example, only 1 transaction is used in coalesced memory access (Figure \ref{fig:coalesce}) as opposed to 3 in uncoalesced memory access (Figure \ref{fig:uncoalesce}). 

\begin{figure}[htbp]
	\centering
	\includegraphics[width=8cm]  {\picfolder 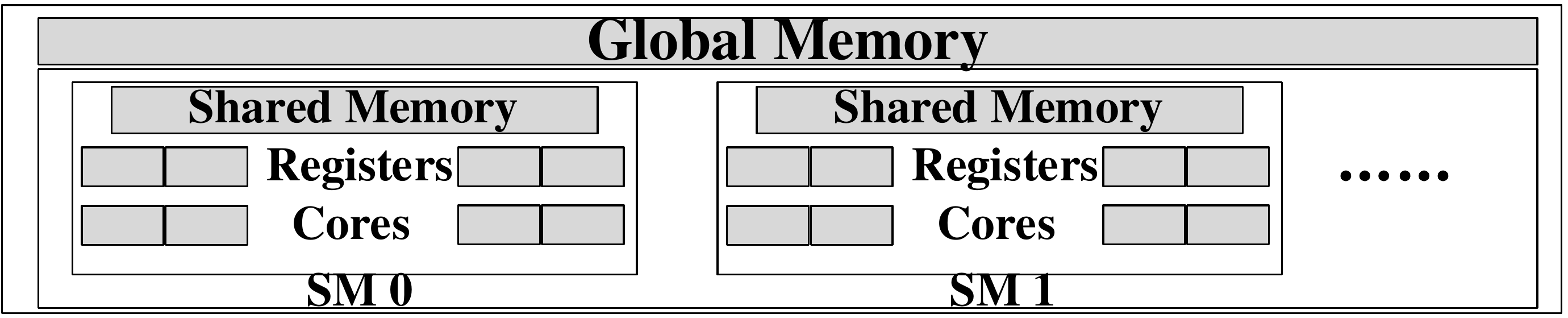}        	
	\caption{Memory Hierarchy of GPU}      	
	\label{fig:memory}   
	\vspace{-0.1in}     	
\end{figure}
\vspace{-0.2in}

\begin{figure}[htbp]
	\centering
	\includegraphics[width=8cm]  {\picfolder 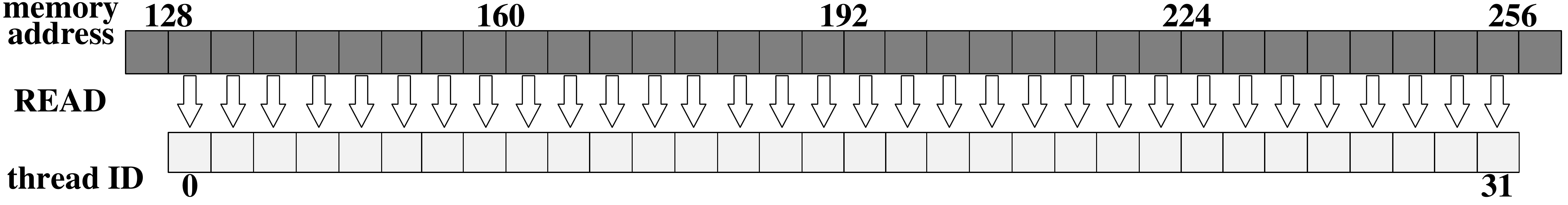}        	
	\caption{An example of coalesced memory access}      	
	\label{fig:coalesce}   
	\vspace{-0.15in}     	
\end{figure}
\vspace{-0.2in}

\begin{figure}[htbp]
	\centering
	\includegraphics[width=8cm]  {\picfolder 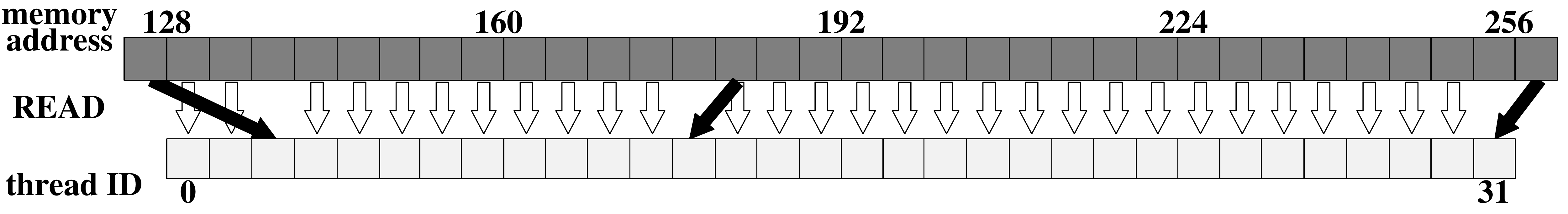}        	
	\caption{An example of uncoalesced memory access}      	
	\label{fig:uncoalesce}     
	\vspace{-0.15in}     	
\end{figure}
\vspace{-0.2in}

\subsection{Challenges of GPU-based Subgraph Isomorphism}\label{sec:challenge}

Although GPU is massively parallel, a naive use of GPU may yield worse performance than highly-tuned CPU algorithms. 
There are three challenges in designing GPU algorithms for subgraph isomorphism that we discuss below.


\nop{
if each thread $t_{j}$ is used for one vertex $v_{j}$, neighbors of $v_{j}$ all needs to be fetched by $t_{j}$ from \emph{Global Memory}.
In this case warp divergence will happen due to the inequality of neighbor number of different vertices.
If thread $t_{i}$ finishes its task first, it must wait for other threads in the same warp to end, as a result of the \emph{SIMD} restriction.
Generally, it is a well-known problem named load imbalance when applying parallel algorithms.
The warp divergence is the result of intra-warp load imbalance, but load imbalance also occurs between warps or even blocks. 
}

\Paragraph{Amount of Work}.
Let $n$ and $n^{\prime}$ be the number of vertices of $G$ and $Q$, the amount of work is \myhl{$n^{n^{\prime}}$} in Figure \ref{fig:tree}.
If there are sufficient number of threads, all paths can be fully parallelized.
But that is not always possible and  too much redundant work will degrade the performance.
GpSM's strategy (filtering candidates and joining them) is better as it prunes invalid matches early.
However, Example \ref{example:1} shows that the two-step output scheme used in GpSM doubles the amount of work in join processing, which is a key issue that must be overcome.

\nop{
There are two metrics of GPU algorithms: work complexity, the number of all steps if executing sequentially; step complexity, the maximum steps needed when running concurrently, which corresponds to time cost.
In the case of parallel prefix-sum \cite{harris2007parallel}, the step complexity is  $O(\log n)$ while the work complexity is $O(n)$.
Considering the case of summing up $n$ integers, sequential algorithm needs $O(n)$ steps and its step complexity and work complexity are both $O(n)$.
In contrary, using parallel operation, the step complexity will be lowered to $O(\log n)$ while the work complexity is still $O(n)$.
However, increasing work complexity may not help reduce the step complexity especially when the amount of work is much larger than the GPU cores.
To utilize parallelism for graph algorithms, sequential algorithms often need to be re-designed and introduce extra work that may increase work complexity but reduce step complexity.
However, if the computing resource is not sufficient for tasks, high work complexity is the key factor that reduces performance.
For example, Merrill \cite{DBLP:conf/ppopp/MerrillGG12} proposes several strategies to reduce the work complexity of BFS on GPU from $O(|V|^{2}+|E|)$ to asymptotically optimal $O(|V|+|E|)$.
In the case of subgraph isomorphism, state-of-the-art techniques adopt two-step output scheme, which doubles the join process, thus dramatically increases the work complexity.
}

\Paragraph{Memory Latency}. 
Large graphs can only be placed in global memory.
In subgraph isomorphism, we need to perform $N(v,l)$ extractions many times, and they are totally scattered due to inherent irregularity of graphs \cite{burtscher2012quantitative}.
It is hard to coalesce memory access in this case, which aggravates latency.

\nop{
The inherent irregularity \cite{burtscher2012quantitative} of graphs makes it hard to coalesce memory access, hence aggravates the latency of access to global memory.
GSI uses an encoding-based strategy for filtering and a novel \emph{PCSR} data structure for extracting neighbors, which help reduce the number of memory transactions.
Furthermore, shared memory is well utilized to cache data, thus accelearte set operations in the join process.
}

\Paragraph{Load Imbalance}. 
GPU performs best when each processor is assigned the same amount of work.
However, neighbor lists vary sharply in size, causing severe imbalance between blocks, warps and threads.
Balanced workload is better, because the overall performance is limited by the longest workload. 




\nop{
\subsection{Comparison and Analysis} \label{sec:existingsolution}

We select two state-of-the-art algorithms as counterparts: GpSM \cite{GpSM} and GunrockSM \cite{GunrockSM}.
They both generate edge candidates first and join them by \emph{two-step output scheme}.
The joining phase is done twice, which doubles the work complexity.
Other than high work complexity, GunrockSM and GpSM both lack optimizations in leveraging  \emph{Shared Memory} to reduce memory latency.
Furthermore, they both have the problem of load imbalance to some degree.

Shortcomings analyzed above are consistant with challenges in Section \ref{sec:challenge} and we redesign GPU-based subgraph isomorphism to consider them.
Traditional two-step output scheme is replaced by a more elaborate method \emph{Prealloc-Combine}) to avoid double work complexity.
The \emph{Duplicate Removal} strategy also helps reduce redundant inputs, thus reduces the amount of work.
To speed up global memory access, \emph{Shared Memory} is fully utilized by \emph{PCSR} data structure for $N(v,l)$ extraction, \emph{Batch Implementation} of set operations and \emph{Write Cache} for writing intermediate results.
Memory accesses are also perfectly coalesced with these techniques.
Besides, we use finer grained load balance strategy(i.e. \emph{4-layer balance scheme}) to limit imbalance in all layers.

To sum up, based on the limitations of two counterparts, we propose our solution \emph{GSI} to accelerate subgraph isomorphism on GPU.
The filtering strategy and joining phase of GSI is very different from GunrockSM and GpSM, which will be detailed in Section \ref{sec:algorithm}.
Without loss of generality, we make some assumptions for ease of presentation:
\begin{itemize}
\item Graphs are undirected.(However, GSI can be naturally extended to process directed graphs.)
\item All queries are weakly connected and no parallel edges exist.
\item The size of work group on GPU is 32 threads, as used in \emph{CUDA 8.0}.
\end{itemize}
}

\section{Solution Overview}\label{sec:algorithm}

\optionshow{}{
The framework of \emph{GSI} is given in Figure \ref{fig:frame}, which consists of filtering and joining phases. 
}
Our solution consists of filtering and joining phases.
\optionshow{}{
In the filtering phase, a set of candidate vertices in data graph $G$ are collected for each query node $u \in V(Q)$ (denoted as $C(u)$); while, in the joining phase, these candidate sets are joined according to the constraints of subgraph isomorphism (see Definition \ref{def:subgraphisomorphism}). 
We discuss how to use GPU to accelerate both phases in the following sections. 
}

\optionshow{}{
\begin{figure}[htbp]   
	\centering
\includegraphics[width=8cm]{\picfolder 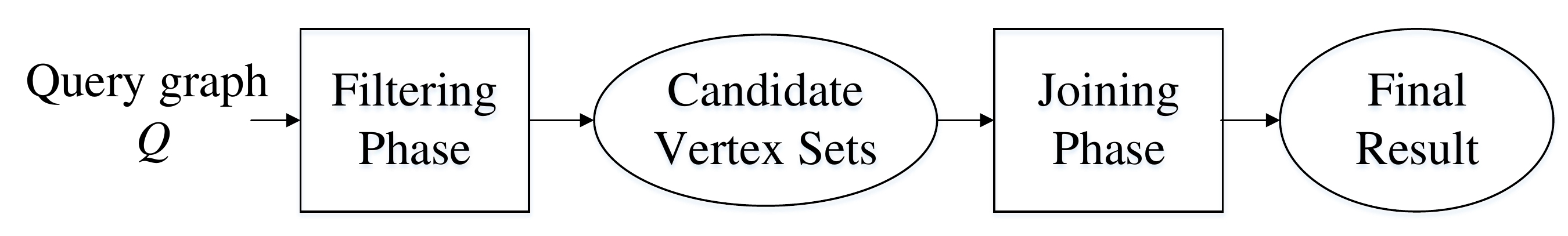}        	
\caption{Framework of GSI algorithm}      	
\label{fig:frame}    
	\vspace{-0.15in}     	
\end{figure}
}

\subsection{Filtering Phase}\label{sec:filter}


Generally, a lightweight filtering method with high pruning power is desirable. 
Since GSI adopts a vertex-oriented strategy, we select candidate vertices $C(u)$ for each query node $u$ in query graph $Q$. 
More powerful pruning means fewer candidates. 
Many pruning techniques have been proposed, such as \cite{DBLP:journals/fcsc/ZengZ18}. 
The basic pruning strategy is based on  ``neighborhood structure-preservation'': if a vertex $v$ in $G$ can match  $u$ in $Q$, the neighborhood structure around $u$ should be preserved in the neighborhood around $v$. 
In this work, we propose a suitable data structure that fits GPU architecture to implement pruning.

We encode the neighborhood structure around a vertex $v$ in $G$ as a length-$N$ bitvector signature $S(v)$. 
Generally, it has two parts. 
The first part is called vertex label encoding that hashes a vertex label into $K$ bits. 
The second part encodes the adjacent edge labels together with the corresponding neighbor vertex. 
We divide the $(N-K)$ bits into $\frac{{N - K}}{2}$ groups with 2 bits per group.
For each (edge, neighbor) pair $(e,v^{\prime})$ of a vertex $v$, we combine $L_{e}$ and $L_{v^{\prime}}$ (i.e., the labels of edge $e$ and $v^{\prime}$) into a key and hash it to some group. Each group has three states: ``00''-- no pair is hashed to this group; ``01''--only a single pair is hashed to this group; and ``11''--more than one pair is hashed to this group. 
Figure \ref{fig:encodeTABLE}(a) illustrates vertex signature $S(v_0)$ of $G$ in Figure \ref{fig:example}. 
We offline compute all vertex signatures in $G$ and record them in a signature table (see Figure \ref{fig:encodeTABLE}(b)). 
We have the same encoding strategy for each vertex $u$ in $Q$. 
It is easy to prove that if $S(v)\&S(u)\neq S(u)$, $v$ is definitely not a candidate for $u$ (``\&'' means ``bitwise AND operation''). 

Given a query graph $Q$, we compute online vertex signatures for $Q$. 
For each query vertex $u$, we have to check all vertex signatures in the table (such as Figure \ref{fig:encodeTABLE}(b)) to fix candidates. 
We can perform the filtering in a massively parallel fashion. 
Furthermore, the natural load balance of accessing fixed-length signatures is suitable for GPU. 
To further improve the performance, we organize the vertex signature table in column-first instead of row-first. 
Recall that all threads in a warp read the first element of different signatures in the table, the row-first layout leads to gaps between memory accesses (see Figure \ref{fig:encodeTABLE}(c)), i.e., these memory accesses cannot be coalesced. 
Instead, the column-first layout provides opportunities to coalesce memory accesses (see Figure \ref{fig:encodeTABLE}(d)).

\begin{figure*}[htbp]
	\centering
	\includegraphics[width=16cm]  {\picfolder 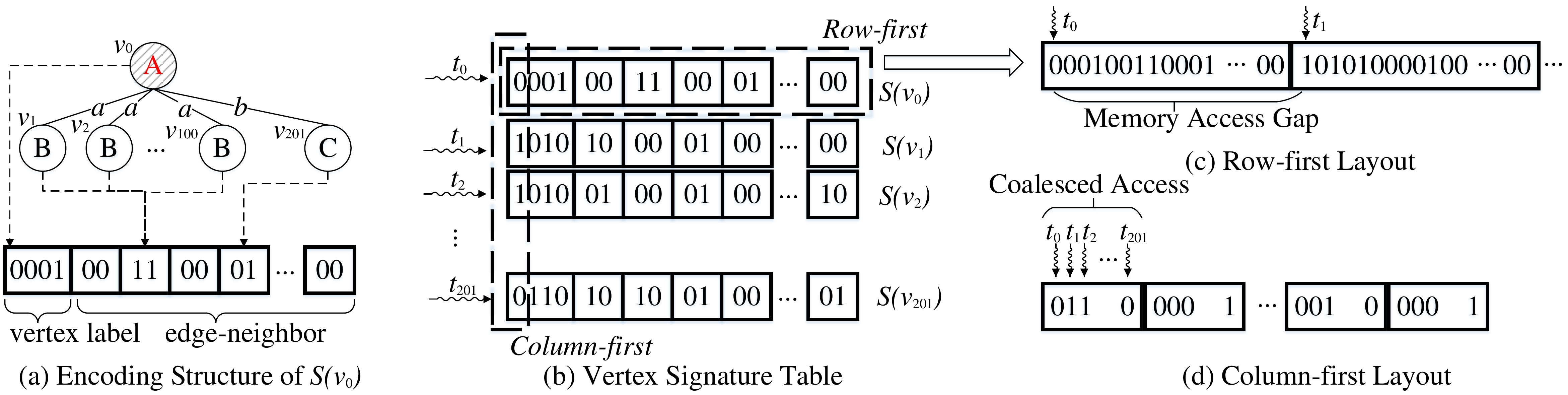}        	
	\caption{Encoding table of data vertices}      	
	\label{fig:encodeTABLE}        
    \vspace{-0.1in}
\end{figure*}

\begin{figure*}[htbp]       
    \centering          
    \includegraphics[width=16cm]{\picfolder 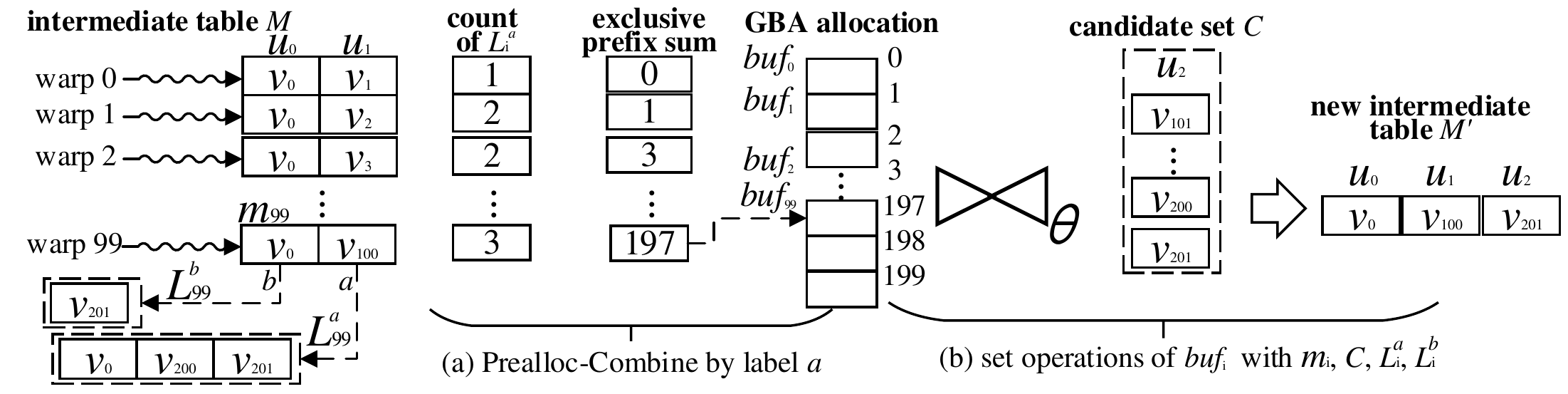}
	\caption{Vertex-oriented Join Strategy}      	
    \label{fig:newtable}
	\vspace{-0.1in}     	
\end{figure*}

\subsection{Joining Phase}\label{sec:join}

The outcome of filtering are candidate sets for all query vertices. 
In Figure \ref{fig:example}, candidate sets are $C(u_{0})=\{v_{0}\}$, $C(u_{1})=\{v_{1},v_{2},...,v_{100}\}$,  and $C(u_{2})=C(u_{3})=\{v_{101},v_{102},...,v_{201}\}$. 
Figure \ref{fig:newtable} demonstrates our vertex-oriented join strategy.
Assume that we have matches of edge $\overline{u_0u_1}$ in table $M$ and candidate vertices $C(u_2)$. 
In $Q$, $u_2$ is linked to $u_0$ and $u_1$ according to the edge labels $b$ and $a$, respectively. 
Thus, for each record $(v_i,v_j)$ in $M$, we read $N(v_i,b)$ and $N(v_j,a)$ and do the set operation $N(v_i ,b) \cap N(v_j ,a) \cap C(u_2 ) \setminus \{v_{i},v_{j}\}$, where $N(v_i,b)$ and $N(v_{j},a)$ denote neighbors of $v_i$ with edge label $b$ and $v_{j}$ with edge label $a$, respectively. 
If the result is not empty, new partial answers can be generated, as shown in Figure \ref{fig:newtable}. 

Notice that there are two primitive operations: accessing one vertex's neighbors based on the edge label (i.e., $N(v,l)$ extraction) and set operations. 
We first present a novel data structure for graph storage on GPU (Section \ref{sec:graphds}). 
Then, the parallel join algorithm (including the implementation of set operations) is detailed in Section \ref{sec:basicjoin}. 

\nop{
In this section, we discuss how to implement these primitives efficiently by employing the power of massive parallelism. 
Firstly, we present a novel data structure in our GPU subgraph join algorithm (Section \ref{sec:graphds}). 
Then, the parallel join algorithm is described in detail in Section \ref{sec:basicjoin}. 
GPU-based set operations will be studied in Section \ref{sec:optimization}.
The implementation of set operations is also critical to the overall performance, which is placed in later sections.

Firstly, we present a novel data structure in our GPU subgraph join algorithm (Section \ref{sec:graphds}). 
Then, the join algorithm is described in detail in Section \ref{sec:basicjoin}. 
Finally, in Section \ref{sec:setop}, we also study the implementation of set operations which is critical to the overall performance.}

\section{Data Structure of Graph: PCSR}\label{sec:graphds}

\nop{
Although join processing is a computation-intensive task, it frequently accesses neighbors of vertex $v$ following the adjacent edge label $l$, i.e., 
reading $N(v,l)$, where $N(v,l)=\{v^{\prime}| \overline{vv'}  \in E(G)  \wedge L_E (\overline{vv^{\prime}} ) = l\}$. 
To speed up this primitive, we propose a novel data structure to reduce memory latency. 
}

\emph{Compressed Sparse Row (CSR)} \cite{CSR} is widely used in existing algorithms (e.g., GunrockSM and GpSM) on sparse matrices or graphs, and it allows locating one vertex's neighbors in $O(1)$ time. 
Figure \ref{fig:oldcsr} shows an example: the 3-layer CSR structure of $G$ in Figure \ref{fig:example}.
The first layer is ``row offset'' array, recording the address of each vertex's neighbors.
The second layer is ``column index'' array, which stores all neighbor sets consecutively.
The corresponding weight/label of each edge is stored in ``edge value'' array.
If no edge weight/label exists, we can remove ``edge value'' array and yield 2-layer CSR structure.
\nop{
Traditional CSR cannot deal with edge labels. 
Figure \ref{fig:oldcsr} shows a simple way (Gunrock CSR structure, GCSR for short) to deal with edge labels, which is used in GunrockSM \cite{GunrockSM}. 
It adds an ``edge value'' array to record the edge label for each neighbor. 
In GCSR, when extracting $N(v,l)$, all neighbors of $v$ are accessed and checked whether or not corresponding edge label is $l$.
}
To extract $N(v,l)$ in CSR, all neighbors of $v$ must be accessed and checked whether or not corresponding edge label is $l$.
Obviously, the memory access latency is very high and it suffers from severe thread underutilization because threads extracting wrong labels are inactive thus wasted. 
We carefully design a GPU-friendly CSR variant to support accessing  $N(v,l)$ efficiently. 
\myhl{
The complexity of $N(v,l)$ extraction consists of locating and enumerating.
In our structures, $N(v,l)$ is stored consecutively, i.e., the complexity of enumerating is the same: $O(|N(v,l)|)$.
Thus, we use the time complexity of locating $N(v,l)$ as metric.
}

\begin{figure}[htbp]   
	\centering
\includegraphics[width=8cm]{\picfolder 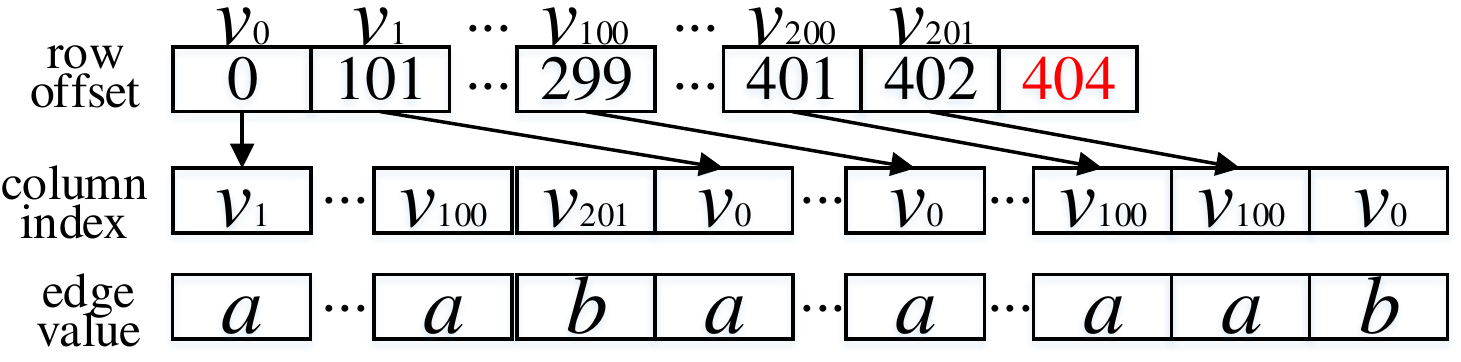}        	
	\vspace{-0.15in}     	
\caption{Traditional CSR structure}      	
\label{fig:oldcsr}    
\end{figure}

\nop{
\begin{definition}\label{def:partition}\textbf{Edge Label-Partitioned Graph.} 
    Given a graph $G$ and edge label $l$, the edge label-partitioned $P(G,l)$ of $G$ is the subgraph $G^{\prime}$ (of $G$) induced by all edges with label $l$.
\end{definition}
To speed up memory access, we divide graph $G$ into different \emph{edge label-partitioned graphs} (see Definition \ref{def:partition}). 
}

To speed up memory access, we divide  $G$ into different \emph{edge label-partitioned graphs} (for each edge label $l$, the edge $l$-partitioned $P(G,l)$ is the subgraph $G^{\prime}$ (of $G$) induced by all edges with label $l$).
These partitioned graphs are stored independently and edge labels are removed after partitioning.
The straightforward way is to store each one using traditional CSR. 
However, it cannot work well, since vertex IDs in a partitioned graph are not consecutive. 
For example, the edge partitioned graph $P(G,b)$ only has two edges and four vertices ($v_0$,$v_{1}$,$v_{101}$,$v_{201}$).  
The non-consecutive vertex IDs disable accessing the corresponding vertex in the row offest in $O(1)$ time (by vertex ID). 
There are two simple solutions:

\begin{figure*}
	\centering
        \includegraphics[width=18cm]{\picfolder 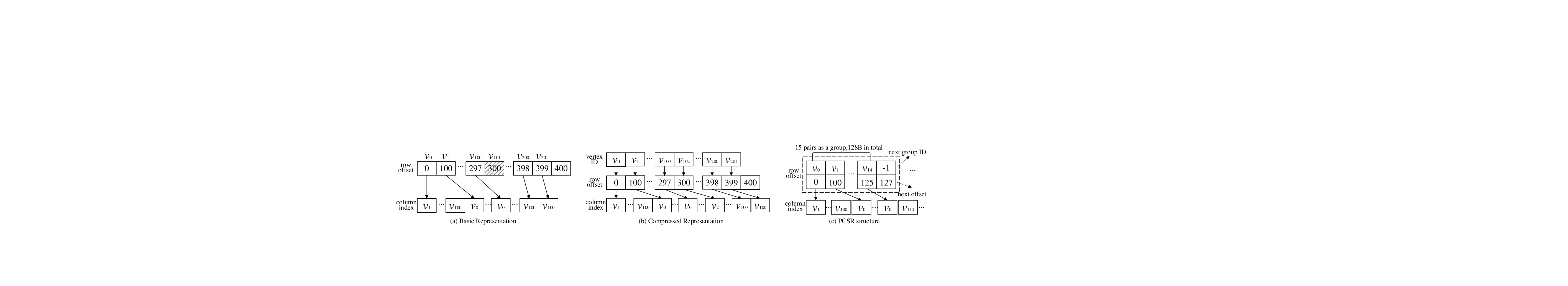}        	
    \vspace{-0.15in}
    \caption{Three Representations of edge $a$-partitioned graph}
    \label{fig:midcsr}      
\end{figure*}

(1) \emph{Basic Representation}. 
The entire vertex set $V(G)$ is maintained in the row offset for each edge partitioned graph CSR, regardless of whether or not a vertex $v$ is in the partitioned graph (see Figure \ref{fig:midcsr}(a)). 
Clearly, this approach can locate a vertex's neighbors in $O(1)$ time using the vertex ID directly, but it has high space cost: $O(|E(G)| + |L_{E(G)}|\times |V(G)|)$, where $ |L_{E(G)}|$ is the number of distinct edge labels. 
In complex graphs such as DBpedia, there are tens of thousands of different edge labels and this solution is not scalable.


(2) \emph{Compressed Representation}. 
A layer called ``vertex ID'' is added, and binary search is performed over this layer to find corresponding offset (see Figure \ref{fig:midcsr}(b)). 
Obviously, the overall space cost is lowered, which can be formulated as $O(|E(G)|)$. 
However, this leads to more memory latency. 
Theoretically, we require $\lceil \log{(|V(G,l)|+1)} \rceil+2$ memory transactions to locate $N(v,l)$, where $|V(G,l)|$ denotes the number of vertices in the edge $l$-partitioned graph $P(G,l)$. 

Therefore, neither of the above methods work for a large data graph $G$. 
In the following we propose a new GPU-friendly data structure to access $N(v,l)$ efficiently, called \emph{PCSR} (Definition \ref{def:pcsr}).
We reorganize the row offset layer using hashing.
The row offset layer is an array of hash buckets, called \emph{group}. 
Each item hashed to the group is a pair $(v, o_{v})$, where $v$ is a vertex ID and $o_{v}$ is the offset of $v$'s neighbors in column index $ci$. 
Let $GPN$ be a constant to denote the maximum number of pairs in each group. 
The last pair is an \emph{end flag} to deal with the overflow. 
We require that $2\leq GPN\leq 16$, then one group can be  read concurrently by a \emph{single} memory transaction using one warp.
\nop{
We require that $2\leq GPN\leq 16$, since a warp has 32 threads.
According to this structure, one group can be  read concurrently by a \emph{single} memory transaction using all threads within one warp.

In this way, 
We re-organize the row offset layer using a hash technique and give the data structure PCSR in Definition \ref{def:pcsr}.
Within a unique group, there are $B$ items combined into an array.
We require that $B$ is even and $B\leq 32$ so a group occupies no more than 128B memory and can be concurrently read by one memory transaction using a warp.
All PCSR structures are built independently and the process of building $PCSR(G,l)$ is presented in Algorithm \ref{alg:pcsr}.
In Line \ref{algcmd:overflow}, we call a group overflows if more than $\frac{B}{2}-1$ keys are mapped to it by $f$.
}

\begin{definition}\label{def:pcsr}\textbf{PCSR structure.} 
Given an edge $l$-partitioned graph $P(G,l)$, the Partitioned Compressed Sparse Row (PCSR for short) $PCSR(G,l)=\{gl, ci\}$ is defined as follows:
\begin{itemize}
\item $ci$ is the column index layer that holds the neighbors.
\item $gl=\{g_{i}\}$ is an array of groups and each group is a collection of pairs (no more than $GPN$ pairs). 
\item Each pair in $g_{i}$ is denoted as $(v, o_{v})$ except for the last pair, where $v$ is a vertex ID and $o_{v}$ is the offset of $v$'s neighbors in $ci$, i.e., a prefix sum of the number of neighbors for vertices. 
Let $n_{v}$ be the offset of next pair. 
$v$'s neighbors start at $ci[o_{v}]$ and end before $ci[n_{v}]$. 
All vertices in one group have the same hash value. 
\item The last pair $(GID,END)$ is the overflow flag. 
If $GID$ is -1, it means no overflow; otherwise, overflowed vertices are stored in the $GID$-th group. 
Note that $g_{i}.END$ is the end position of previous vertex's neighbors in $ci$, i.e., the first $o_{v}$ in group $g_{i+1}$.
\end{itemize}
\end{definition}

Figure \ref{fig:midcsr}(c) is an example of PCSR corresponding to edge $a$-partitioned graph. 
Let $D$ denote $P(G,l)$, the edge label $l$-partitioned graph. Algorithm \ref{alg:pcsr} builds PCSR for $D$. 
We allocate $|V(D)|$ groups (i.e. hash buckets) for $gl$ and $|E(D)|$ elements for $ci$ (Line \ref{algcmd:glalloc}). 
For each node $v$, we hash $v$ to one group using a hash function $f$ (Lines \ref{algcmd:hash1}-\ref{algcmd:hash2}). 
If some group $g_i$ overflows (i.e., more than $GPN-1$ vertices are hashed to this group), we find another empty group $g_j$ and record group ID of $g_j$ in the last pair in $g_i$ to form a linked list (Lines \ref{algcmd:regrp}-\ref{algcmd:gidll}). 
Claim \ref{clm:groups} confirms that we can always find empty groups to store these overflowed vertices. 
Finally, we put neighbors of each vertex in $ci$ consecutively and record their offsets in $gl$ (Lines \ref{algcmd:grpci}-\ref{algcmd:end}). 

\begin{claim} \label{clm:groups}
    When the overflow happens in Line \ref{algcmd:overflow} of Algorithm \ref{alg:pcsr}, we can always find enough empty groups to store all overflowed vertices.
\end{claim}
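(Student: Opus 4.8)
The plan is to prove the claim by a counting argument that compares the number of groups Algorithm~\ref{alg:pcsr} allocates against the number it can ever occupy. Recall that the algorithm allocates exactly $|V(D)|$ groups for $gl$ and inserts exactly $|V(D)|$ vertices, each stored as a single pair $(v,o_{v})$. The crucial bookkeeping fact is that, by Definition~\ref{def:pcsr}, the last pair of \emph{every} group (including the last group of an overflow chain) is reserved for the overflow flag $(GID,END)$; hence each group holds at most $GPN-1$ actual vertex pairs, and since $GPN\ge 2$, any group that is used at all holds at least one vertex pair.

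First I would bound the total group usage. Fix a hash value and let $c$ be the number of vertices mapped to it by $f$. All these vertices are stored in a single linked list of groups sharing that hash value, and because each such group carries at most $GPN-1$ of them, the list uses $\lceil c/(GPN-1)\rceil$ groups. Since $GPN-1\ge 1$, we have $\lceil c/(GPN-1)\rceil\le c$ whenever $c\ge 1$. Summing over all hash values that actually occur, the total number of groups used is at most the sum of these $c$ values, which equals $|V(D)|$ — that is, no more than the number of groups allocated. Hence globally the supply of groups suffices.

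To conclude the claim, which concerns the \emph{online} allocation triggered whenever a group overflows at Line~\ref{algcmd:overflow}, I would maintain the invariant that after any prefix of insertions the number of non-empty groups is at most the number of vertices inserted so far (the non-empty groups partition those vertices, and each non-empty group contains at least one of them). Consequently, at the instant some group $g_i$ overflows while the $j$-th vertex is being inserted, at most $j-1$ groups are non-empty, so at least $|V(D)|-(j-1)\ge 1$ groups remain empty, and one of them can be chosen as $g_j$ and linked into $g_i$.

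The step I expect to require the most care is the capacity accounting — consistently treating every group as holding at most $GPN-1$ vertices rather than $GPN$ — because the inequality $\lceil c/(GPN-1)\rceil\le c$ is tight exactly at the boundary (e.g.\ $c=GPN-1$ versus $c=GPN$), so an off-by-one in how many pairs fit in the final group of a chain would break the bound. The remaining arguments are routine.
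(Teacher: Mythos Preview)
Your proposal is correct and rests on the same pigeonhole idea as the paper: with $|V(D)|$ groups, $|V(D)|$ vertices, and every non-empty group holding at least one vertex, the number of occupied groups can never exceed the number of vertices placed, so an empty group is always available when overflow occurs.

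The decomposition differs slightly. The paper argues \emph{per hash bucket}: it first shows that the total number of hash conflicts equals the total number of empty groups, then observes that a bucket receiving $z$ keys contributes $z-1$ conflicts while needing only $\lceil z/(GPN-1)\rceil - 1 \le z-1$ extra groups, and concludes by noting the buckets do not interfere. You instead maintain a \emph{global online invariant} (after $j$ insertions, at most $j$ groups are non-empty), which directly yields an available empty group at every overflow event without the per-bucket bookkeeping. Your framing is a little cleaner for the sequential allocation the claim describes, and it sidesteps the paper's informal ``do not influence each other'' step; the paper's version, on the other hand, makes the local surplus at each bucket explicit. Either way the capacity accounting you flagged (at most $GPN-1$ vertices per group) is the only delicate point, and you have it right.
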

\optionshow{}{
\begin{IEEEproof} \label{prf:groups}
Firstly, once there is a hash conflict of keys, one more empty group arises.
Otherwise, let $x$ be the total number of conflicts and $y$ be the number of empty groups, we have $x\neq y$.
But it causes a paradox: $|V|-y$ is the number of non-empty groups and the number of all nodes should be equal to $|V|-y+x$, which means $y=x$. 
Secondly, $\forall g_{i}$, if group $g_{i}$ overflows, it needs to find $\lceil \frac{z}{GPN-1} \rceil -1$ empty groups where $z$ is the number of keys mapped to $g_{i}$.
The number of conflicts within group $g_{i}$ is $z-1$ and $g_{i}$ produces $z-1$ empty groups.
Obviously, $z-1\ge \lceil \frac{z}{GPN-1} \rceil -1$, so there are enough empty groups for each group's overflow.
These groups do not influence each other, thus the overall empty groups are enough.
\end{IEEEproof}
}

Based on PCSR, we compute one vertex's neighbors according to edge label. 
An example of computing $N(v_{0},a)$ in Figure \ref{fig:midcsr}(c) is given as follows. 
\begin{enumerate}
\item use the same hash function $f$ to compute the group ID $idx$ that $v_{0}$ maps to, here $idx=0$;
\item read the entire $0$-th group (i.e., $g_{0}$) to shared memory concurrently using one warp in one memory transaction;
\item probe all pairs $(v^{\prime},o_{v^{\prime}})$ in this group ($g_0$) concurrently using one warp;
\item we find the first pair (in group $g_0$) that contains $v_{0}$. The corresponding offset is 0 and the next offset 100. 
It means that $ci[0,...,99]$ in the column index layer are $v_{0}$'s neighbors. 
\end{enumerate}

Assume that vertex $v$ is hashed to the $i$-th group $g_i$. 
Due to the hash conflict, $v$ may not be in group $g_{i}$. 
In this case, according to the last pair, we can read another group whose ID is $g_{i}.GID$ and then try to find $v$ in that group. 
We iterate the above steps until $v$ is found in some group or a group is found whose $g_{i}.GID$ is ``-1'' (i.e., $v$ does not exist in $D$).


\emph{Parameter Setting}.
The choice of $GPN$ is critical to the performance of PCSR, affecting both time and space.
With smaller $GPN$, the space complexity is lower while the probability of group overflow is higher.
Once a group overflows, we may need to read more than one group when locating $N(v,l)$, which is more time consuming.
With larger $GPN$, the probability of group overflows is reduced, though the space cost rises.
Recall that the width of global memory transaction is exactly 128B, so in GSI we set $GPN=16$ to fully utilize  transactions.
Under this setting, there can be at most 15 keys within a group.
The space complexity is a bit high, which can be quantified as $32\times |V(D)|+|E(D)|$.
However, it is worthwhile and affordable because at any moment at most one partition is placed on GPU.
In addition, under this setting no group overflow occurs in any experiment of Section \ref{sec:experiment}.

\emph{Analysis}.
Within PCSR, $|V(D)|$ keys are hashed into $|V(D)|$ groups, which is called \emph{one-to-one hash} \cite{site:hash}. 
Under this condition, the time complexity of locating $N(v,l)$ can be analyzed by counting memory transactions. 
It is easy to conclude that the number of memory transactions is decided by the longest conflict list of one-to-one hash function. 
According to \cite{site:hash}, the expectation of longest conflict list's length is upper bounded by $1+\frac{5\log{|V(D)|}}{\log{\log{|V(D)|}}}$.
If $|V(D)| < 2^{32}$, the expectation of the maximum length of conflict list is smaller than 45. 
It means that at most $\left\lceil {\frac{{45}}{{GPN-1}}} \right\rceil  = \left\lceil {\frac{{45}}{{15}}} \right\rceil  = 3$ memory transactions are needed, since one transaction accesses one group and each group contain $GPN$ (=16) valid vertices. 
This is quite a large data graph.
In our experiments, even for graphs with tens of millions of nodes, the longest conflict list's length is no larger than 13, which means that only one memory transaction is needed to locate $N(v,l)$.
In other words, locating $N(v,l)$ is  $O(1)$. 
Furthermore, for each edge label $l$, the space cost of the corresponding PCSR is linear to $P(G,l)$. 
Thus, we can conclude that the total space of all PCSRs for $G$ is $O(|E(G)|)$. 
Table \ref{tab:csrs} summarizes the comparison, \myhl{where the time complexity counts both locating and enumerating $N(v,l)$ together}.


%



\begin{table}[htbp]
			\caption{\myhl{Efficiency of different data structures} } 	
	\begin{threeparttable}
\small
	\centering
    \begin{tabular}{ccc}
		\toprule
        Structure & Time Complexity & Space Complexity  \\
        \midrule
        CSR & $O(|N(v)|)$ & $O(|E|)$ \\
        \hline
        BR & $O(|N(v,l)|)$ & $O(|E|+|L_{E}|\times |V|)$     \\
        \hline
        CR & $O(\log{|V(G,l)|+|N(v,l)|})$ & $O(|E|)$    \\
        \hline
        PCSR &  $O(|N(v,l)|)$ & $O(|E|)$ \\
        \bottomrule
	\end{tabular}
\begin{tablenotes}
\item[*] {\small BR and CR denote ``Basic Representation'' and ``Compressed Representation'', respectively.}
\end{tablenotes}
	\label{tab:csrs}
	\end{threeparttable}
\end{table}

\nop{
Table \ref{table:csrs} compares efficiency of PCSR with the traditional CSR,  ``Basic Representation'' and ``Compressed Representation'' for graphs with edge labels.
It shows that PCSR can support computing $N(v,l)$  in amortized $O(1)$ time with the $O(|E(G)|)$ space, which is a practical and scalable data structure for $N(v,l)$ primitive.

While time complexity of PCSR is similar to the basic representation strategy(Figure \ref{fig:midcsr}(a)), the space complexity of \emph{PCSR} is also praiseworthy.
Notice that $|V(P(G,l))|\leq 2|E(P(G,l))|$, the complexity of $G$ can be formulated as 
$$\sum_{l \in L_{E(G)}}{32\times |V(P(G,l))|+E(P(G,l))}=O(\sum_{l \in L_{E(G)}}{|E(P(G,l))|})=O(|E(G)|)$$
The result shows that the space complexity of PCSR is similar to the compressed representation strategy(Figure \ref{fig:midcsr}(b)), which is independent to the number of edge labels in $G$.
Consequently, \emph{PCSR} fuses the merits of two strategies, which makes it a practical and scalable data structure for $N(v,l)$ extraction.
}

\nop{
\vspace{-0.15in}
\begin{figure}[htbp]   
	\centering
\includegraphics[width=8cm]{\picfolder 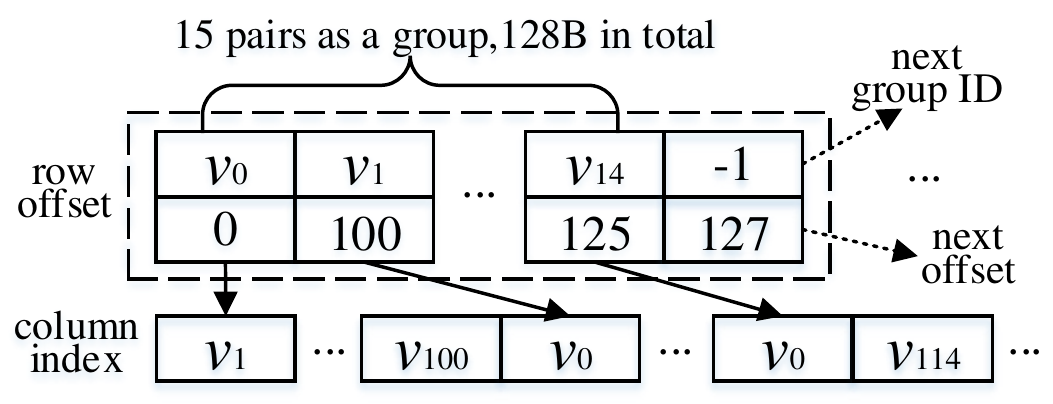}        	
	\vspace{-0.15in}     	
\caption{Edge $a$-partitioned PCSR structure}      	
\label{fig:pcsr}    
\end{figure}
\vspace{-0.15in}
}

%
\begin{algorithm}
	\small
\caption{Build PCSR structure}
\label{alg:pcsr}
\KwIn{partitioned graph $D=P(G,l)$}
\KwOut{PCSR structure of $D$}
allocate $gl$ array (containing $|V(D)$ groups) and $ci$ array (containing $|E(D)|$ elements); \label{algcmd:glalloc}  \\
select a hash function $f$, set $pos=0$;\\
\ForEach{node $v$ in $D$}
{ \label{algcmd:hash1}
    use $f$ to map $v$ to a group ID $i$; \label{algcmd:hash2}     \\
}
\ForEach{group $g_{i}$ in $gl$}
{ \label{algcmd:regrp}
\If{$g_{i}$ overflows}
{ \label{algcmd:overflow}
    find enough empty groups $g_{j}$ to store keys of $g_{i}$; \label{algcmd:groups}   \\
    set their $GID$s to form a linked list; \label{algcmd:gidll}   \\
}
}
\ForEach{group $g_{i}$ in $gl$}
{  \label{algcmd:grpci}
\ForEach{pair $T_{j}=\{v,o_{v}\}$ in $g_{i}$}
{
	set $o_{v}=pos$ in $T_{j}$;\\
	add $N(v)$ to $ci$ from $pos$ on and set $pos=pos+num(N(v))$;\\
}
set $END=pos$ in $g_{i}$; \label{algcmd:end}     \\
}
let $gl=\{g_{i}\}$ and return $\{gl,ci\}$ as data structure;\\
\end{algorithm}

\section{Parallel Join Algorithm}\label{sec:basicjoin}

Algorithm \ref{alg:join} outlines the whole join algorithm, where the intermediate table $M$ stores all matches of partial query graph $Q^{\prime}$. 
In each iteration, we consider one query vertex $u$  and join intermediate table $M$ with candidate set $C(u)$ (Lines 9-11). 
Heuristically, the first selected vertex has the minimum score $score(u^{\prime})=\frac{C(u^{\prime})}{deg(u^{\prime})}$ (Lines 5-7). 
In later iterations, we consider the adjacent edge label frequency ($freq(l)$) when selecting the next query vertex to be joined (Lines 12-13). 

\begin{algorithm}
	\small
\caption{The whole join process}
\label{alg:join}
\KwIn{query graph $Q$, data graph $G$}
\KwOut{the final matches of $Q$ in $G$}
Let $Q^{\prime}$ be the partial query graph, set $Q^{\prime}=\phi$; \\
\ForEach{node $u^{\prime}$ in $Q$}
{
$score(u^{\prime})=\frac{C(u^{\prime})}{deg(u^{\prime})}$;\\
}
\For{$i=1$ to $|V(Q)|$}
{
\If{$i==1$}
{
$u_{c}=argmin_{u^{\prime}}{score(u^{\prime})}$; \\
set intermediate table $M=C(u_{c})$ and add $u_{c}$ to $Q^{\prime}$;  \\
}
\Else
{
$u=argmin_{u^{\prime}\notin Q^{\prime}}\{score(u^{\prime})|u^{\prime}$ is connected to $Q^{\prime}\}$;\\
Call Algorithm \ref{alg:joinTwo} to join $M$ with $C(u)$ (generating new intermediate table $M^{\prime}$);  \label{algcmd:calltwo}  \\
set $M=M'$, $u_{c}=u$ and add $u$ to $Q^{\prime}$;\\
}
\ForEach{edge $\overline{u_{c}u^{\prime}}$ in $Q$}
{
$score(u^{\prime})=score(u^{\prime})\times freq(L_{E}(\overline{u_{c}u^{\prime}}))$;\\
}
}
return $M$ as final result;\\
\end{algorithm}

Algorithm \ref{alg:joinTwo} lists how to process each join iteration (i.e., Line \ref{algcmd:calltwo} in Algorithm \ref{alg:join}). 
Before discussing the algorithm, we first study some of its key components. 
Each warp in GPU joins one row of $M$ with candidate set $C(u)$: acquires neighbors of vertices in this row leveraging restrictions on edge labels, and intersects them with $C(u)$ (\emph{set intersection}). 
The result of the intersection should remove the vertices in this row (\emph{set subtraction}), to satisfy the definition of isomorphism.

Let $Q^{\prime}$ be the partial query graph induced by query vertices $u_{0}$ and $u_{1}$.
Figure \ref{fig:newtable} shows the intermediate table $M$, in which each row $m_{i}$ represents a partial match of $Q^{\prime}$. 
Let  $L_{i}^{a}$ and $L_{i}^{b}$ be the neighbor lists of $m_{i}$, e.g., $L_{99}^{a}$ and $L_{99}^{b}$ represents $N(v_{0},a)$ and $N(v_{100},b)$ respectively.
For each row $m_{i}$, we assign a buffer ($buf_{i}$) to store temporary results.
Assuming that the next query vertex to be joined is $u_{2}$, let us consider the last warp $w_{99}$ that deals with the last row $m_{99}=\{v_{0},v_{100}\}$. 
There are two linking edges $\overline{u_0u_2}$ and $\overline{u_1u_2}$ with edge labels $a$ and $b$, respectively. 
Warp $w_{99}$ works as follows:
\begin{enumerate}
\item Read $v_0$'s neighbors with edge label $a$, i.e., $N(v_0,a)$;
\item Write $buf_{99}=(N(v_0,a) \setminus \{v_{0},v_{100}\}) \wedge C(u_2)$;
\item Read $v_{100}$'s neighbors with edge label $b$, i.e., $N(v_{100},b)$;
\item Update $buf_{99}=buf_{99} \wedge N(v_{100},b)$.
\item If $buf_{99} \neq \phi$, each item in $buf_{99}$ can be linked to the partial match $m_{99}$ to form a new match of $Q^{\prime} \cup u_{2}$. We write these matches to a new intermediate table $M^{\prime}$. 
\end{enumerate}
All warps execute the exact same steps as above in a massively parallel fashion on GPU, which will lead to some conflicts when accessing memory. 


\Paragraph{Problem of Parallelism}.
When all warps write their corresponding results to global memory concurrently, conflicts may occur. 
To enable concurrently outputting results, existing solutions use \emph{two-step output scheme}, which means the join is done twice. 
In the first round,  the valid join results for each warp are counted. 
Based on prefix-sum of these counts, each warp is assigned an offset. 
In the second round, the join process is repeated and join results are written to the corresponding addresses based on the allocated offsets. 
An example has been discussed in Example \ref{example:1}. 
Obviously, this approach doubles the amount of work.

\nop{
and concurrently running threads have no way to write results.
Traditional methods \cite{GpSM,GunrockSM} all use two-step output scheme to write the intermediate results to global memory, which means the joining phase are done twice:   
\begin{itemize}
\item At the first time, the whole join process is done and the matching number of each row is recorded.    
\item A prefix-sum operation is done on numbers of these valid matchings to generate the size and offsets of $M^{\prime}$.
\item Allocate space for $M^{\prime}$ on GPU memory.
\item At the second time, the whole join process is done and matching results of each row are written to corresponding offsets of $M^{\prime}$.
\end{itemize}
}

\Paragraph{Prealloc-Combine}. 
Our solution (Algorithm \ref{alg:joinTwo}) performs the join only once, which is called ``Prealloc-Combine''. 
Each warp $w_i$ joins one row ($m_i$) in $M$ with candidate set $C(u)$. 
Different from existing solutions, we propose ``Prealloc-Combine'' strategy. 
Before join processing, for each warp $w_i$, we allocate memory for $buf_{i}$ to store all valid vertices that can be joined with row $m_i$ (Line \ref{algcmd:memory} in Algorithm \ref{alg:joinTwo}). 
A question is how large this allocation should be.
Let $Q^{\prime}$ be the partial query graph that has been matched. 
We select one linking edge $e_{0}=\overline{u_{0}^{\prime}u}$ in query graph $Q$ ($u_0^{\prime} \in V(Q^{\prime})$), and $u$ ($\notin V(Q^{\prime})$) is the query vertex to be joined. 
Assume that the edge label is ``$l_{0}$''. 
As noted above, $m_i$ denotes one partial match of query graph $Q^{\prime}$. 
Assume that vertex $v^{\prime}_i$ matches $u_{0}^{\prime}$ in $m_i$. 
It is easy to prove that the capacity of $buf_{i}$ is upper bounded by the size of $N(v^{\prime}_i,l_{0})$. 
Based on this observation, we can pre-allocate  memory  of size $|N(v^{\prime}_i,l_{0})|$ for each row. 
Note that this pre-allocation strategy can only work for ``vertex-oriented'' join, since we cannot estimate the join result size for each row in the ``edge-oriented'' strategy. During each iteration, the selected edge $e_{0}$ is called \emph{the first edge} and it should be considered first in Line \ref{algcmd:lkedges} of Algorithm \ref{alg:joinTwo}. 
For example, in Figure \ref{fig:newtable}, $\overline{u_{1}u_{2}}$ is selected as $e_{0}$, thus the allocated size of $buf_{99}$ should be $|N(v_{100},a)|=3$.

\nop{
In ``vertex-oriented'' join strategy, the capacity of $buf_{i}$ can be upper bounded by the size of $N(v,l)$
Thanks to ``vertex-oriented'' join
in Algorithm \ref{alg:joinTwo} we store temporary results in buffers first, use prefix-sum to compute the size and offsets, and write temporary results to $M^{\prime}$ finally.
In this way, the join process is only done once, except for the small overhead of reading/writing temporary results.
However, another problem arises in Line \ref{algcmd:memory} because we need to prepare memory for $buf_{i}$ to store temporary results while the capacity is not known.
In the case of edge-by-edge joining, this question is insoluble; but in node-by-node joining, the capacity of $buf_{i}$ can be upper bounded by the size of $N(v,l)$.
Therefore, selecting an edge $e_{0}=(l_{0},u_{0})$, for each row $m_{i}$ of $M$, we can allocate in advance $num(N(m_{i}[u_{0}],l_{0}))$ memory as $buf_{i}$($m_{i}[u_{0}]$ is the mapping vertex of $u_{0}$ in $m_{i}$).
}

Though buffers can be pre-allocated separately for each row (i.e., each row issues a new memory allocation request), it is better to combine all buffers into a big array and assign consecutive memory space (denoted as $GBA$) for them (only one memory allocation request needed).
Each warp only needs to record the offset within $GBA$, rather than the pointer to $buf_{i}$. 
The benefits are two-fold:


(1) \emph{Space Cost}. Memory is organized as pages and some pages may contain a small amount of data.
In addition, pointers to $buf_{i}$ need an array for storage (each pointer needs 8B).
Combining buffers together helps reduce the space cost because it does not waste pages and only needs to record one pointer (8B) and an offset array (each offset only needs 4B). 

(2) \emph{Time Cost}. Combined preallocation has lower time overhead due to the reduction in the number of memory allocation requests.
Furthermore, the single pointer of $GBA$ can be well cached by GPU and the number of global memory load transactions decreases thanks to the reduction in the space cost of pointer array. \\

\begin{algorithm}
    \small
	\caption{Join a new candidate set}
	\label{alg:joinTwo}
	\KwIn{query graph $Q$, current intermediate table $M$ corresponding to the partial matched query $Q^{\prime}$, candidate set $C(u)$ ($u$ is the vertex to be joined), and linking edges $ES$ between $Q^{\prime}$ and $u$.}
	\KwOut{updated intermediate table $M^{\prime}$}
	Call Algorithm \ref{alg:preallocate} to select the first edge $e_{0}$, and pre-allocate memory $GBA$ and offset array $F$.  \label{algcmd:memory} \\
	\ForEach {linking edge $e=\overline{u^{\prime}u}$ in $ES$}
	{  \label{algcmd:lkedges}
		let $l$ be the label of edge $e$ in $Q$; \\
        launch a GPU kernel function to join $M$ with $C(u)$ \label{algcmd:join};\\
		\ForAll{each row $m_{i}$ (partial match) in $M$}
		{
			let $buf_{i}$ be the segment $F_{i}$\textasciitilde$F_{i+1}$ in $GBA$; \\ 
			assign a unique warp $w_{i}$ to deal with $m_{i}$;\\
			assume that $v_{i}^{\prime}$ match $u^{\prime}$ in $m_i$; \\ 
            \If{$e$ is the first edge $e_{0}$}  
			{ \label{algcmd:first}
				do set subtraction $buf_{i}=N(v_{i}^{\prime},l) \setminus m_{i}$ \label{algcmd:subtract};\\
				do set intersection $buf_{i}=buf_{i} \cap C(u)$ \label{algcmd:intersect};\\		
			}
			\Else
			{
				do set intersection $buf_{i}=buf_{i} \cap N(v_{i}^{\prime},l)$ \label{algcmd:merge};\\
			}
		}
	}
    do prefix-sum scan on $\{num(buf_{i})\}$;  \label{algcmd:newscan}   \\
	allocate memory for new intermediate table $M'$; \label{algcmd:newalloc}  \\
	launch a GPU kernel function to link $M$ and $buf_{0,...|M|-1}$ to generate $M'$ \label{algcmd:link};\\
	\ForAll{partial answer $m_{i}$ in $M$}
	{
		read $m_{i}$ into shared memory;\\
		assign a unique warp $w_{i}$ to deal with $m_{i}$;\\
		\ForAll{$z$ in $buf_{i}$}
		{
			copy $m_{i}$ and $z$ to the corresponding address of $M'$ as a new row; \label{algcmd:newtable}  \\
		}
	}
	return $M'$ as the result;\\
\end{algorithm}

Algorithm \ref{alg:preallocate} shows how to allocate buffers $buf_i$ for each row $m_i$. 
Assume that there exist multiple linking edges between $Q^{\prime}$ (the matched partial query graph) and vertex $u$ (to be joined). 
To reduce the size of $|GBA|$, among all linking edges, we select the linking edge $\overline{u_{0}^\prime u}$ whose edge label $l_{0}$ has the minimum frequency in $G$ (Line \ref{algcmd:minedge}).
We perform a parallel exclusive prefix-sum scan on each row's upper bound $|N(v_{i}^{\prime},l_{0})|$ (Lines \ref{algcmd:scan0}-\ref{algcmd:scan}), later the offsets ($F[i]$, $\forall 0\leq i<|M|$) and capacity of $GBA$ ($F[|M|]$) are acquired immediately. 
With the computed capacity, we pre-allocate the $GBA$ and offset array $F[0,...,|M|-1]$ (Line \ref{algcmd:GBAalloc}). 
Each buffer $buf_i$ begins with the offset $F[i]$. 

Let us recall Figure \ref{fig:newtable}, where Figure \ref{fig:newtable}(a) is the process of $GBA$ allocation.
First, a parallel exclusive prefix sum is done on $num(L_{i}^{a})$ and the size of $GBA$ is computed (200).
Then $GBA$ is allocated in global memory and the address of $buf_{i}$ is acquired.
For example, the final row $m_{99}$ has three edges labeled by $a$, thus $num(L_{99}^{a})$ is 3 and the beginning address of $buf_{99}$ in $GBA$ is 197.
However, if $\overline{u_{0}u_{2}}$  is selected as the first edge $e_{0}$, we can yield smaller $|GBA|$ (100).
The label $b$ of $\overline{u_{0}u_{2}}$ is more infrequent than $a$, thus heuristically it is superior, as illustrated in Algorithm \ref{alg:preallocate}.
For ease of presentation, we still assume that $\overline{u_{1}u_{2}}$ is selected as $e_{0}$ in Figure \ref{fig:newtable}.

\nop{
Each offest 

$buf_{i}$

for all buffers and each row $m_{i}$ can find $buf_{i}$ by the array pointer and corresponding offset. 
}

In each join iteration, Algorithm \ref{alg:joinTwo} handles all linking edges between $Q^{\prime}$ and $u$.
It allocates $GBA$ (Line \ref{algcmd:memory}), processes linking edges one by one (Lines \ref{algcmd:lkedges}-\ref{algcmd:merge}), and finally generates a new intermediate table $M^{\prime}$(Lines \ref{algcmd:newscan}-\ref{algcmd:newtable}).
Obviously, $GBA$ is allocated only once in Algorithm \ref{alg:joinTwo} and no new temporary buffer is needed.
Figure \ref{fig:newtable}(a) performs the $GBA$ allocation by edge $\overline{u_{1}u_{2}}$ and Figure \ref{fig:newtable}(b) finishes set operations.
Correspondingly, edge $\overline{u_{1}u_{2}}$ is joined first.
For example, $L_{99}^{a}$ subtracts $m_{99}$ and the result is $\{v_{200},v_{201}\}$, which are stored in $buf_{99}$ (Line \ref{algcmd:subtract}).
Next, for each valid element $x$ in $buf_{99}$, we check its existence in candidate set of $u_{2}$ (Line \ref{algcmd:intersect}).
The second edge is $\overline{u_{0}u_{2}}$ and it is processed by Line \ref{algcmd:merge}, where $buf_{99}$ is further intersected with $L_{99}^{b}$ and the result is $\{v_{201}\}$, i.e., $num(buf_{99})=1$.
We acquire the matching vertices of each row $m_{i}$ in $buf_{i}$, then a new prefix sum is performed to obtain size and offsets of $M^{\prime}$ (Line \ref{algcmd:newscan}).
After $M^{\prime}$ is allocated, $w_{i}$ copies extensions of $m_{i}$ to $M^{\prime}$ (Lines \ref{algcmd:newalloc}-\ref{algcmd:newtable}).

\nop{
\begin{claim}\label{clm:buffer}
The number of valid elements in $buf_{i}$ never exceeds the capacity.
\end{claim}
\begin{proof}\label{prf:buffer}
The first edge $e_{0}$ is processed first and current number of valid elements is the same as $|buf_{i}|$.
Line \ref{algcmd:subtract} does set subtraction and Line \ref{algcmd:intersect} does set intersection.
They donot add valid elements to $buf_{i}$.
Neither does Line \ref{algcmd:merge}, which processes other edges.
\end{proof}
}

\begin{algorithm}
	\small
	\caption{Function: Pre-allocate Memory}
	\label{alg:preallocate}
	\KwIn{query graph $Q$, current intermediate table $M$ corresponding to the partial matched query $Q^{\prime}$, candidate set $C(u)$ ($u$ is the query vertex to be joined), and linking edges $ES$ between $Q^{\prime}$ and $u$.}
	\KwOut{Allocated memory $GBA$ and Offset arrary $F$.}
    Among all edges in $ES$, select edge $e_{0}=\overline{u_{0}^{\prime}u}$, whose edge label $l_{0}$ has the minimum frequency in $G$.  \label{algcmd:minedge} \\
	Set offset $F[0]$=0; \\
	\ForEach{row $m_i$ in $M$, $i=0,...,|M|-1$}
	{ \label{algcmd:scan0}
		Assume vertex $v_{i}^{\prime}$ matches query vertex $u_{0}^{\prime}$ in row $m_i$. \\
	   $F[i+1]$=$F(i)$+$|N(v_{i}^{\prime},l_{0})|$.  // Do exclusive prefix-sum scan.    \label{algcmd:scan}  \\
	}
    Let $|GBA|=F[|M|]$;  \\
    Allocate consecutive memory with size $|GBA|$ and let $GBA$ record the beginning address. \label{algcmd:GBAalloc}   \\
	Return $GBA$ and offset array $F[0,...,|M|-1]$. 	
\end{algorithm}

\Paragraph{GPU-friendly Set Operation}.
In Algorithm \ref{alg:joinTwo}, set operations (Lines \ref{algcmd:subtract},\ref{algcmd:intersect},\ref{algcmd:merge}) are in the innermost loop, thus frequently performed.
Traditional methods (e.g., \cite{fox2018fast}) all target the intersection of two lists. 
However, in our case there are many lists of different granularity for set operations.
A naive implementation launches a new kernel function for each set operation and uses traditional methods to solve it.
This method performs bad, so we propose a new GPU-friendly solution.

There are three granularities: small (partial match $M_{i}$), medium (neighbor list $N(v,l)$)  and large (candidate set $C(u)$).
We use one warp for each row and design different strategies for these lists:
\begin{itemize}
    \item For small list $M_{i}$, we cache it on shared memory until the subtraction finishes.
    \item For medium list $N(v,l)$, we read it batch-by-batch (each \emph{batch} is 128B) and cache it in shared memory, to minimize memory transactions.
    \item For large list $C(u)$, we first transform it into a bitset, then use exactly one memory transaction to check if vertex $v$ belongs to $C(u)$.
\end{itemize}
Lines \ref{algcmd:subtract} and \ref{algcmd:intersect} can be combined together.
After subtraction, the check in Line \ref{algcmd:intersect} is performed on the fly.

We also add a write cache to save write transactions, as there are enormous invalid intermediate results which do not need to be written back to $buf_{i}$.
It is exactly 128B for each warp and implemented by shared memory.
Valid elements are added to cache first instead of written to global memory directly.
Only when it is full, the warp flushes its cached content to global memory using exactly one memory transaction.

\section{Optimizations}  \label{sec:optimize}

\optionshow{}{
There are two more optimizations in Algorithm \ref{alg:joinTwo}: improving workload balance and elimination of duplicate vertices.
We  discuss these below.
}

\subsection{Load Balance}\label{sec:balance}

In Algorithm \ref{alg:joinTwo}, load imbalance mainly occurs in Lines \ref{algcmd:join} and \ref{algcmd:link}, where neighbor set sizes of all rows are distributed without attention to balance.
We propose to balance the workload using the following method (\emph{4-layer balance scheme}): (1) Extract workloads that exceed $W_{1}$, and dynamically launch a new kernel function to handle each one; (2) Control the entire block to deal with all workloads larger than $W_{2}$; (3) In each block, all warps add their tasks  exceeding $W_{3}$ to shared memory and then divide them equally; and (4) Each warp finishes remaining tasks of the corresponding row.

The first strategy limits inter-block imbalance; the next two limit imbalance between warps.
$W_{2}$ should be set as the block size of CUDA, while $W_{1}$ and $W_{3}$ are parameters that should be tuned ($W_{1}>W_{2}>W_{3}>32$).
This method is superior to merging all tasks and dividing them equally \cite{DBLP:conf/ppopp/MerrillGG12}, because it avoids the overhead  of merging tasks into work pool.

\nop{
In the first layer, $W_{1}$ is a parameter which is much larger than the block size.
If workload of some row exceeds $W_{1}$, it does not end even after all other tasks are finished.
In this case, this task can only utilize a single SM while all other SMs are idle.
Therefore, a new kernel is launched to finish this kind of tasks first, which uses larger block or more blocks for each task.
After the first layer, warps of a block vote to control the whole block to deal with workloads between $W_{2}$ and $W_{1}$.
$W_{2}$ is a parameter larger than or equal to the block size, for example, it can be set as twice the block size.
Later, remaining workloads are all below $W_{2}$.
In the third layer, for each block, tasks exall warps.ceeding $W_{3}$ are added into a work pool in shared memory first and then evenly distributed to 
$W_{3}$ is a parameter larger than the warp size, for example, it can be set as twice the warp size.
With the two strategies above, load imbalance between warps is bounded by $W_{3}$.
Finally, each warp finish the corresponding workload whose size is below $W_{3}$.
Instead of merging all tasks and dividing them equally (see \cite{DBLP:conf/ppopp/MerrillGG12}), the 4-layer balance scheme is adopted because it is time-consuming to add tasks to work pool.
}

\subsection{Duplicate Removal}\label{sec:dupr}

In Figure \ref{fig:newtable}, the first elements of all rows are all $v_{0}$ and each row does the same operation: extracting $N(v_{0},a)$.
To reduce redundant memory access, we propose a heuristic method to remove duplicates within a block.
If rows $x$ and  $y$ have a common vertex $v$ in the same column, we let the two warps of $x$ and $y$ ($wx$ and $wy$) share the input buffer (placed in shared memory) of $N(v,l)$.
For the shared input buffer, only a single warp (e.g., warp $wx$) reads neighbors into buffer.
Other warps wait for the input operation to finish and then all warps perform their own operations.
\optionshow{The pseudo code is in \cite{fullVersion}.}{}
\optionshow{}{
Algorithm \ref{alg:dupr} gives implementation details.
\begin{algorithm}
	\small
	\caption{Duplicate removal within each block}
	\label{alg:dupr}
	\KwIn{vertex $v_{i}$ and input buffer $buf_{i}$ for each warp $w_{i}$}
	\ForEach{warp $w_{i}$ in the block}
	{
		$id[i]$ = $v_{i}$;\\
		synchronize all warps within the block;\\
		use $w_{i}$ to find the first occurrence $j$ of $v_{i}$ in $id[]$;\\
		$addr[i] = j$;\\
	}
	\ForEach{batch $b_{i}$ of $buf_{i}$}
	{
		\If{$addr[i]==i$}
		{
			use $w_{i}$ to read batch $b_{i}$ into buffer $buf_{i}$;\\
		}
	    synchronize all warps within the block;\\
	    each warp $w_{i}$ processes the batch located in $buf_{addr[i]}$;\\
	}
\end{algorithm}
}

\section{Extensions}  \label{sec:extension}

\subsection{Homomorphism and edge isomorphism}

Our method can be generalized to support homomorphism and edge isomorphism subgraph query semantics.
Both of these semantics are equally important as vertex isomorphism and to all foresight will be supported by upcoming query languages standards.
We briefly discuss them below. 

\begin{figure}[htbp]
	\centering		
	\includegraphics[width=8cm]{\picfolder 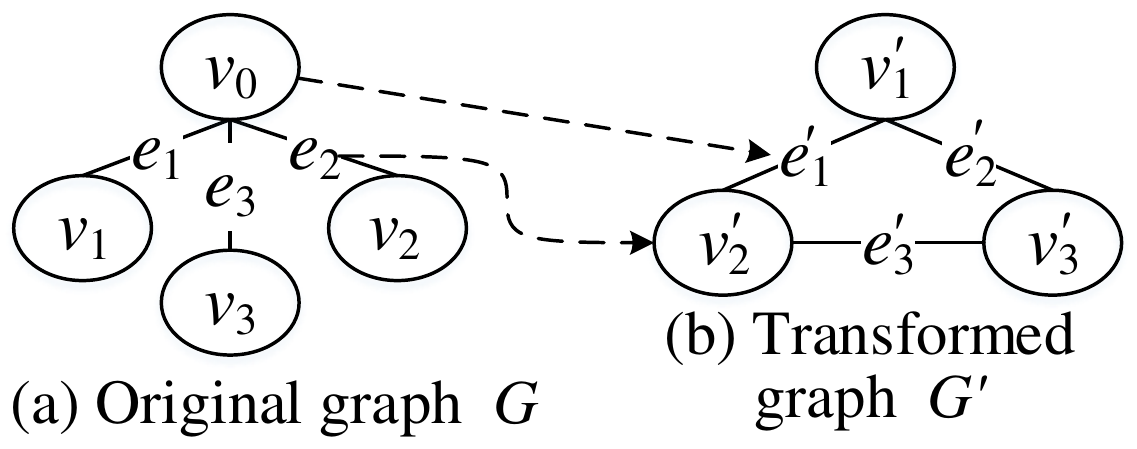}	
	\vspace{-0.15in}
	\caption{Example of edge isomorphism}	
	\label{fig:edgeiso}
\end{figure}


According to \cite{DBLP:journals/ijprai/ConteFSV04}, \emph{homomorphism} semantics drops the constraint in subgraph isomorphism that two different query vertices in $Q$ should be matched to two distinct vertices in the data graph $G$. 
Therefore, to support homomorphism, we only need to eliminate the set subtraction operation (Line 10 in Algorithm 3) in our joining phase. 

According to \cite{gross2004handbook}, the semantics of edge isomorphism requires that two edges (in $Q$) share a common vertex of $Q$ if and only if the mapping edges (in $G$) share a common vertex of $G$.
To support it, we can transform the original graph $G$ into a new graph $G^{\prime}$. 
Each edge $e$ of $G$ is transformed into a vertex $v^{\prime}$  of $G^{\prime}$, and each vertex $v$ is transformed into several edges of $G^{\prime}$. 
Figure \ref{fig:edgeiso} shows an example.
Let $e_{1}=\overline{v_{0}v_{1}}$, $e_{2}=\overline{v_{0}v_{2}}$ and $e_{3}=\overline{v_{0}v_{3}}$ be three edges in $G$; we transform $e_{1}$, $e_{2}$ and $e_{3}$ into three vertices $v_{1}^{\prime}$, $v_{2}^{\prime}$ and $v_{3}^{\prime}$ of $G^{\prime}$.
Since $e_{1}$ and $e_{2}$ share the common vertex $v_{0}$, we transform $v_{0}$ into an edge $e_{1}^{\prime}=\overline{v_{1}^{\prime}v_{2}^{\prime}}$ of $G^{\prime}$.
The same thing happens between $e_{1}$ and $e_{3}$, $e_{2}$ and $e_{3}$, which means $v_{0}$ is transformed into $e_{1}^{\prime}$, $e_{2}^{\prime}$ and $e_{3}^{\prime}$.
Such transformation is also performed on query graph $Q$ to generate $Q^{\prime}$.
Then, we can perform subgraph isomorphism of $Q^{\prime}$ on $G^{\prime}$ to get results $R^{\prime}$.
The final results $R$ of searching $Q$ on $G$ can be acquired via reverse transformation of $R^{\prime}$.

\subsection{Process vertex/edge multi-labels}

In some situations, a vertex (or an edge) may have multiple labels.
For vertex/edge multi-labels, we should first change the definition of subgraph isomorphism. 
A proper change is: For each vertex $u$/edge $\overline{u_1u_2}$ in query graph $Q$, assume that the matching vertex/edge is denoted as $f(u)$ and $f(\overline{u_1u_2})$, respectively and we require that $L_{V}(u) \subseteq L_{V}(f(u))$ and $L_{E}(\overline{u_1u_2}) \subseteq L_{E}(\overline{f(u_1)f(u_2)})$. 

Under the above definition,  the data structures and algorithms in both filtering and join phases in GSI should be adjusted as follows.

(1) For vertices with multiple labels, we only need to modify the filtering phase. In the case of single-label vertex $v$, we use a special strategy to process vertex label, which stores the label directly in the beginning of $v$'s signature (see Section \ref{sec:expFilter}. 
However, if $v$ has multiple labels, this special strategy can not be used, i.e., we need to use hash functions for vertex labels.
Given a data vertex $v$ (or query vertex $u$), we can hash all labels of $v$ (or $u$) into its signature, then perform the original filtering phase.
The main difference is that candidate sets must be refined by verifying the containment of vertex labels, which can be finished quickly on GPU.
Given a query vertex $u$ and its candidate set $C(u)$, each thread can examine each candidate $v$ in $C(u)$, checking if the label set of $v$ contains the label set of $u$. After the refinement, the joining phase can work properly because it does not consider vertex labels.

(2) For edges with multiple labels, we work as follows. 
Given graphs $Q$ and $G$ with multiple edge labels, we transform them as multiple-edge graphs $Q^{\prime}$ and $G^{\prime}$, respectively, where each edge has a single edge label  (as shown in the following Figure \ref{fig:multilab}). 
The whole GSI algorithm does not need be changed to handle the multiple-edge graph, as GSI always processes one edge at a time (Line 3-13 of Algorithm 3). 

\begin{figure}[htbp]
	\centering		
	\includegraphics[width=8cm]{\picfolder 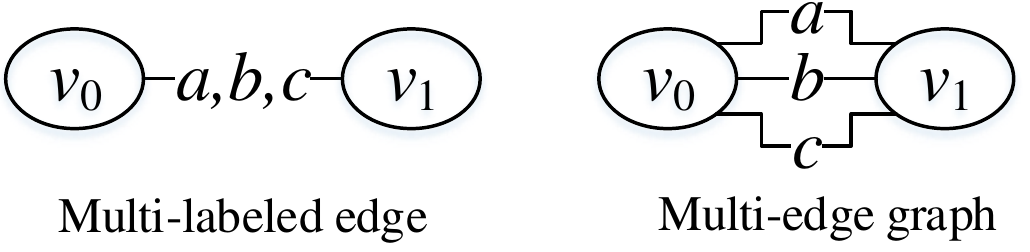}	
	\vspace{-0.15in}
	\caption{Example of multi-labeled edge}	
	\label{fig:multilab}
\end{figure}

\section{Experiments}\label{sec:experiment}

In this section, we evaluate our method (GSI) against state-of-the-art subgraph matching algorithms, such as CPU-based solutions VF3 \cite{VF3}, CFL-Match \cite{CFL-Match}, \myhl{CBWJ \cite{CBWJ}}, and GPU-based solutions GpSM and GunrockSM. 
\myhl{
We also include two state-of-the-art GPU-based RDF systems (MAGiQ \cite{MAGiQ} and Wukong+G \cite{Wukong+G}) in the experiments.
Note that RDF systems are originally designed for SPARQL queries whose semantic is subgraph homomorphism; we extend them to support subgraph isomorphism.
}
All experiments are carried out on a workstation running CentOS 7 and equipped with Intel Xeon E5-2697 2.30GHz CPU and 188G host memory, NVIDIA Titan XP with 30 SMs (each SM has 128 cores and 48KB shared memory) and 12GB global memory. 

\subsection{Datasets and Queries} \label{sec:dataset}
The experiments are conducted on both real and synthetic datasets. 
The statistics are listed in Table \ref{tab:dataset}.  
Enron email communication network (\emph{enron}), the Gowalla location-based social network (\emph{gowalla}), \myhl{patent citation network (\emph{patent})} and the road\_central USA road network (\emph{road}) are downloaded from SNAP \cite{site:snap}. 
Large RDF graphs, such as DBpedia \cite{dbpedia} and WatDiv (a synthetic RDF benchmark \cite{watdiv}), are also used. 

\vspace{-0.2in}
\begin{table}[htbp]
	\caption{Statistics of Datasets}
	\vspace{-0.1in}
	\begin{threeparttable}
		\small
		\centering
		\begin{tabular}{crrrrrr}
			\toprule
			Name & $|V|$ & $|E|$ & $|L_{V}|$ & $|L_{E}|$ & MD\tnote{1} & Type\tnote{2} \\
			\midrule
			enron & 69K & 274K & 10 & 100 & 1.7K & rs \\
			gowalla & 196K & 1.9M & 100 & 100 & 29K & rs \\
			\myhl{patent} & \myhl{6M} & \myhl{16M} & \myhl{453} & \myhl{1K} & \myhl{793} & \myhl{rs} \\
			road & 14M & 16M & 1K & 1K & 8 & rm \\
			DBpedia & 22M & 170M & 1K & 57K & 2.2M & rs \\
			WatDiv & 10M & 109M & 1K & 86 & 671K & s \\
			\bottomrule
		\end{tabular}
		\begin{tablenotes}
			\item[*] $|L_{V}|$ and $|L_{E}|$ denote the number of vertex label and edge label, respectively.
			\item[1] Maximum degree of the graph.
			\item[2] Graph type: r:real-world, s:scale-free, and m:mesh-like.
		\end{tablenotes}
		\label{tab:dataset}
	\end{threeparttable}
\end{table}
\vspace{-0.15in}

Since most graphs do not contain vertex/edge labels except for edge labels in RDF datasets and \myhl{vertex labels in patent dataset}, we assign labels following the power-low distribution. 
The default numbers of vertex/edge labels are given in Table \ref{tab:dataset}. 
To generate a query graph, we perform the random walk over the data graph $G$ starting from a randomly selected vertex until $|V(Q)|$ vertices are visited. 
All visited vertices and edges (including the labels) form a query graph. 
The same query graph generation approaches are also used in \cite{yan:gIndex,DBLP:journals/pvldb/HanLPY10}. 

For each query size $|V(Q)|$, we generate 100 query graphs and report the average query running time. 
Note that the default query size $|V(Q)|$ is 12 in the following experiments. 
In Section \ref{sec:lqsize},  we also evaluate  GSI with respect to the number of vertex/edge labels and query size. 

\nop{
we follow a 20/80 distribution which is more realistic.

is also used in our experiments. Synthetic datasets are generated by WatDiv RDF benchmark \cite{watdiv} and transformed into the graphs we need. 
Our datasets cover graphs from different areas with different sizes and distributions, providing a representative benchmark.
The real-world data come from SNAP \cite{site:snap}, which include the Enron email communication network, the Gowalla location-based social network and the road\_central USA road network.  
Besides, DBpedia170M RDF dataset \cite{dbpedia} is also transformed and tested in experiments.
Synthetic datasets are generated by WatDiv RDF benchmark \cite{watdiv} and transformed into the graphs we need.
Our datasets cover graphs from different areas with different sizes and distributions, providing a representative benchmark.
\emph{Label Generation}.
As for labels, we set different numbers of labels for different datasets.
For large datasets, we use more labels, while for small datasets we use less.
RDF datasets already have edge labels, so we just generate veretx labels for them.
For other datasets, we generate both vertex labels and edge labels.
Instead of generating labels randomly, we follow a 20/80 distribution which is more realistic.
For example, in the Enron dataset we use 10 labels for vertices, where 80\% vertices have label 1(40\%) or 2(40\%).
\emph{Query Generation}.
We require that queries are randomly generated and have valid matches in the datasets.
Therefore, each query is generated by a random walk within the data graph.
The number of vertices in each query is restricted to 3\textasciitilde12.
For each dataset, 100 queries are generated by randomly digging from the data graph.
Each query $Q$ is run 3 times and the average is selected as the responding time of $Q$.
Finally, the responding time of 100 queries is averaged to be the metric of performance on a dataset.
}

\nop{
NVIDIA Titan XP is installed on this server, which has 30 SMs(each SM has 128 cores and 48KB shared memory) and 12GB global memory.

The performance of GSI is evaluated in comparison with state-of-the-art subgraph matching algorithms, including VF2 \cite{DBLP:journals/jacm/Ullmann76}, TurboISO \cite{DBLP:conf/sigmod/HanLL13}, GpSM and GunrockSM.
VF2 and TurboISO are widely used CPU algorithms based on backtracking, while GpSM \cite{GpSM} and GunrockSM \cite{GunrockSM} are GPU algorithms.
None of them gives exactly the implementation in our cases, so we try our best to implement the four counterparts.

GSI is implemented using \emph{CUDA 8.0} and \emph{CUB} library \cite{site:cub} and compiled using \emph{NVCC} with flag {\itshape -arch=sm\_35} and {\itshape -O2}.
In all experiments, $N$ and $K$ are set as 512 and 32 respectively in the filtering phase.
Murmurhash 2.0 is used as the hash function in our implementation.
In \emph{Load Balance} scheme of Section \ref{sec:optimization}, we set $W_{1}=4096$, $W_{2}=1024$ and $W_{3}=128$ with the block size and warp size are 1024 and 32 respectively.
}

\nop{
\emph{Environment}.
All experiments are carried out on a server with CentOS 7 installed.
The server is equipped with Intel Xeon E5-2697 2.30GHz CPU and 188G memory.
NVIDIA Titan XP is installed on this server, which has 30 SMs(each SM has 128 cores and 48KB shared memory) and 12GB global memory.
}

\nop{
\emph{Measure}.
Data graph is loaded into memory first and preprocessed to build encoding table and PCSR structures, after which the queries are read one by one and executed.
For each query, we begin counting the time after the query graph is read into memory and end when all results are flushed to a disk file.
We report error if some algorithm fails to answer a query within 100 seconds.
Time of building encoding table and PCSR structures is not listed due to limit of pages.
On the other hand, preprocessing is done offline and consumes several hundreds of seconds at most in our experiments.
}

\optionshow{}{
\subsection{Evaluating Filtering Strategy} \label{sec:expFilter}
Let us recall the encoding technique in Section \ref{sec:filter}. 
The neighborhood structure around each vertex $v$ is encoded into a length-$N$ bitvector signature $S(v)$. 
Furthermore, $K$ of $N$ bits denote vertex $v$'s label and the left bits correspond to $v$'s adjacent edges and neighbors (an example is given in Figure \ref{fig:encodeTABLE}(a)). 
In our experiments, we set $N$=512 and $K$=32. 
By varying the encoding length, we can balance the filtering time and the pruning power.
\optionshow{
The tuning of $N$ and $K$ is  in the full version of the paper \cite{fullVersion}.
}{}

To verify the effectiveness of our encoding, we compare it with the pruning techniques (used in GpSM and GunrockSM) that are based on node label and degree. 
The metrics include time cost and the size of the minimum candidate set, because the joining phase always begins from the minimum candidate set. 
Experimental results (Table \ref{tab:filterPerf}) show that our encoding strategy not only obtains much smaller candidate sizes (reduces 10-100 times) than the filtering in existing algorithms but also consumes less pruning time. 
The superiority of our filtering method is due to the careful design of signature structure on GPU.
Natural load balance is achieved, and the column-first layout of vertex signatures also brings performance improvement due to coalesced memory access.

\begin{table}[htbp]
			\caption{Performance of different filtering strategies}
		\vspace{-0.1in}
	\begin{threeparttable}
		\small
		\centering
		\begin{tabular}{|c|r|r|r|r|r|r|}
			\hline
			\multirow{2}*{Dataset} & \multicolumn{3}{|c|}{Minimum $|C(u)|$} & \multicolumn{3}{|c|}{Time (ms)} \\ 
			\cline{2-7}
			~ &  GpSM & GSM\tnote{1} & GSI &  GpSM & GSM & GSI   \\ 
			\hline
			enron & 2,246 & 2,270 & 111 &  24 & 20 & 9  \\
			\hline
			gowalla &  153 & 1,072 & 90 &  31 & 24 & 16  \\
			\hline
			patent & 2,298 & 3,401 & 25 & 781 & 569 & 354   \\
			\hline
			road & 8 & 2,544 & 7 &  259 & 394 & 187   \\
			\hline
			WatDiv &  871 & 12,145 & 604 & 290 & 252 & 201   \\
			\hline
			DBpedia &  138 & 11,405 & 132 &  410 & 494 & 407    \\
			\hline
		\end{tabular}
		\begin{tablenotes}
			\item[1] The filtering strategy of GunrockSM.
		\end{tablenotes}
		\label{tab:filterPerf}
	\end{threeparttable}
	\vspace{-0.15in}
\end{table}
}

\begin{table*}[htbp]
			\caption{Performance of techniques in join phase}
			\vspace{-0.1in}
	\begin{threeparttable}
		\small
		\centering
		\begin{tabular}{|c|r|r|r|r|r|r|r|r|r|r|r|r|r|r|}
			\hline
			\multirow{2}*{Dataset} & \multicolumn{7}{|c|}{Global Memory Load Transactions} & \multicolumn{7}{|c|}{Query Response Time (ms)} \\ 
			\cline{2-15}
			~ & GSI-\tnote{1} & +DS\tnote{2} & drop & +PC\tnote{3} & drop & +SO\tnote{4} & drop & GSI- & +DS & speedup & +PC & speedup & +SO & speedup  \\ 
			\hline
			enron & 3M & 2.1M & 30\% & 1.6M & 25\% & 656K & 59\% & 573 & 274 & 2.1x & 176 & 1.6x & 28 & 6.3x \\
			\hline
			gowalla & 3.2M & 2M & 38\% & 1.3M & 33\% & 848K & 39\% & 353 &  172 & 2.1x & 88 & 2.0x & 69 & 1.3x \\
			\hline
			patent & 3.5M & 2.4M & 31\% & 1.8M & 25\% & 1.6M & 11\%  & 3K & 1.4K & 2.1x & 700 & 2.0x & 524 & 1.3x   \\
			\hline
			road & 3.4M & 2.2M & 35\% & 1.7M & 22\% & 1.6M & 5\% & 2.4K & 675 & 3.6x & 456 & 1.5x & 456 & 1.0x  \\
			\hline
			WatDiv & 40M & 30M & 25\% & 21M & 28\% & 13M & 39\% & 43K & 31K & 1.4x & 25K & 1.2x & 4.4K & 5.7x \\
			\hline
			DBpedia & 53M & 31M & 42\% & 24M & 21\% & 14M & 43\% & 85K & 48K & 1.8x & 36K & 1.3x  & 6K & 6.0x \\
			\hline
		\end{tabular}
		\begin{tablenotes}
			\item[1] Basic GSI implementation with traditional CSR structure, two-step scheme and naive set operation.
			\item[2,3,4] Add techniques to GSI- one by one: PCSR structure, Prealloc-Combine strategy and GPU-friendly set operation.
		\end{tablenotes}
		\label{tab:GSI}
	\end{threeparttable}
	\vspace{-0.15in}
\end{table*}

\subsection{Evaluating Join Phase}

We evaluate three techniques of the join phase in GSI: PCSR structure, the Prealloc-Combine strategy and GPU-friendly set operation. 
Table \ref{tab:GSI} shows the result, where GSI- is the basic implementation with traditional CSR structure, two-step output scheme and naive set operation.
Two metrics are compared: (1) the number of transactions for reading data from global memory (GLD); (2) the time cost of answering subgraph search query.
We add techniques to GSI- one by one, and compare the performance of each technique with previous implementation.
For example, in Table \ref{tab:GSI}, the column ``+SO'' is compared with the column ``+PC'' to compute GLD drop and speedup.
After adding these techniques, we denote the implementation as GSI.

\begin{table}[htbp]
	\caption{\myhl{Comparison of CR and PCSR}}
	\vspace{-0.1in}
	\begin{threeparttable}
		\small
		\centering
		\begin{tabular}{|c|r|r|r|r|r|r|}
			\hline
			\multirow{2}*{Dataset} & \multicolumn{3}{|c|}{GLD} & \multicolumn{3}{|c|}{Time (ms)} \\ 
			\cline{2-7}
			~ & CR & PCSR & drop & CR & PCSR & speedup  \\ 
			\hline
			enron & 2.4M & 2.1M & 13\%  & 311 & 274 & 1.1x \\
			\hline
			gowalla & 2.5M & 2M & 20\% & 212 & 172 & 1.2x  \\
			\hline
			patent & 3.2M & 2.4M & 25\% & 1.8K & 1.4K  & 1.3x \\
			\hline
			road & 3.0M & 2.2M & 27\% & 873 & 675 & 1.3x \\
			\hline
			WatDiv & 46M & 30M &  35\% & 42K  & 31K &  1.4x \\
			\hline
			DBpedia & 37M & 31M & 16\%  & 56K & 48K &  1.2x \\
			\hline
		\end{tabular}
		\label{tab:cr}
	\end{threeparttable}
\end{table}

\subsubsection{Performance of PCSR structure}

To verify the efficiency of PCSR in Section \ref{sec:graphds}, we compare it with traditional CSR structure.
\nop{
As for ``Basic Representation'' (BR) and ``Compressed Representation'' (CR), we briefly report their performance: \\
(1) BR consumes too much memory and is unable to run on large graphs with hundreds of edge labels, thus we do not include it in the comparison.  \\
(2)  CR is memory-friendly but too slow both theorectically and experimentally, incuring several times larger GLD and longer time than PCSR. 
}
We set the bucket size as 128B and find that the maximum length of conflict list is below 15, even on the largest dataset.
Therefore, with PCSR structure, GSI always finds the address of $N(v,l)$ within one memory transaction, which is a big improvement compared to traditional CSR.

Table \ref{tab:GSI} shows that PCSR brings an observable drop of GLD (about 30\%),  and nearly 2.0x speedup.
The least improvement is observed on WatDiv due to small $|L_{E}|$, while on other datasets the power of PCSR is tremendous, achieving more than 1.8x speedup.
The superiority of PCSR is two-fold:
(1)  fewer memory transactions are needed, as presented in Table \ref{tab:csrs};
and (2)  threads are fully utilized while traditional CSR suffers heavily from thread underutilization.

\myhl{
We also compare PCSR with ``Compressed Representation'' (CR) in Table \ref{tab:cr}.
On all datasets, at least 13\% drop of GLD and 1.1x speedup are achieved.
WatDiv delivers the best performance, where CR even has higher GLD than traditional CSR.
PCSR has no advantage over CR when enumerating $N(v,l)$, and the only difference is locating.
WatDiv has the minimum number of edge labels, thus its edge label-partitioned graphs are very large, leading to high cost of locating for CR.
The ``Basic Representation'' (BR) consumes too much memory to run on large graphs with hundreds of edge labels.
}

\nop{
\begin{table}[htbp]
	\begin{threeparttable}
	\centering
	\caption{Performance of PCSR data structure}
	\begin{tabular}{|c|r|r|r|r|r|r|}
		\hline
		\multirow{2}*{Dataset} & \multicolumn{3}{|c|}{GLD\tnote{1}} & \multicolumn{3}{|c|}{Time (ms)} \\ 
		\cline{2-7}
		~ & csr\tnote{2} & pcsr & drop &  csr & pcsr & drop  \\ 
		\hline
		enron & 2M & 1,8M & 10\%  & 382 & 187 &  51\%  \\
		\hline
		gowalla & 2M & 1,8M & 12\% & 192 & 93 & 52\%   \\
		\hline
		road & 2.3M & 1.7M & 24\% & 1.6K & 523 & 68\%   \\
		\hline
		watdiv & 27M & 25M & 8\% & 29K & 27K & 6\%  \\
		\hline
		dbpedia & 29M & 26M & 10\% & 47K & 38K & 19\%  \\
		\hline
	\end{tabular}
\begin{tablenotes}
	\item[1] Number of Global Memory Load Transactions.
	\item[2] Gunrock CSR structure.
\end{tablenotes}
	\label{tab:pcsrPerf}
	\end{threeparttable}
\end{table}
}

\optionshow{}{
	\begin{table*}[htbp]
		\centering
		\caption{Performance of Write Cache}
		\begin{tabular}{|c|r|r|r|r|r|r|}
			\hline
			\multirow{2}*{Dataset} & \multicolumn{3}{|c|}{Global Memory Store Transactions} & \multicolumn{3}{|c|}{Query Response Time (ms)} \\ 
			\cline{2-7}
			~ & no cache & write cache & drop & no cache & write cache & drop  \\ 
			\hline
			enron & 25,371 & 23,056 & 9\% &  117 & 28 & 76\% \\
			\hline
			gowalla & 43,304 & 37,147 & 14\% & 78 & 69 & 12\%  \\
			\hline
			road & 70,430 & 65,500 & 7\% & 456 & 456 & 0\%  \\
			\hline
			WatDiv & 110,744 & 86,934 & 22\% & 8,396 & 4,425 & 47\%  \\
			\hline
			DBpedia & 248,670 & 90,284 & 64\% & 12,194 & 6,148 & 50\%  \\
			\hline
		\end{tabular}
		\label{tab:wrtcPerf}
	\end{table*}
}

\subsubsection{Performance of Parallel Join Algorithm}
In our vertex-oriented join strategy, there are two main parts: the Prealloc-Combine strategy (PC) and GPU-friendly set operation (SO). 

To evaluate Prealloc-Combine strategy, we implement the two-step  output scheme \cite{GpSM} as the baseline. 
Table \ref{tab:GSI} shows that on all datasets, PC obtains more than 21\% drop of GLD and 1.2x speedup. 
The gain originates from the elimination of double work during join, which also helps reduce GLD, thus further boosts the performance. 
It must be pointed out that PC can reduce the amount of work by at most half, thus there is no speedup larger than 2.0x.






\nop{
In other experiments, GSI is implemented using Prealloc-Combine strategy.
Only in this section we implement GSI by the two-step output scheme \cite{GpSM} and compare the performance of these two strategies.
The result is given in Table \ref{tab:joinPerf}, where we compare both the gld number and query response time.
On all datasets, at least 23\% drop can be found in gload memory transactions.
On the other hand, Prealloc-Combine claims a 30\% drop of time on enron, gowalla and road while on the other two datasets the drop is more than 20\%.

Experimentally and theorectically, Prealloc-Combine strategy not only helps reduce work complexity, but also lowers the gld number.
These prove the advantage of Prealloc-Combine compared with traditional two-step output scheme and lead to an unneglectable gain in the final performance.
}

To evaluate our GPU-friendly set operation, we compare with naive solution: finish each set operation with a new kernel function. 
Table \ref{tab:GSI} shows that SO reduces GLD by about 40\%; consequently, it leads to more than 1.3x speed up. 
\myhl{On patent and road, the improvement is not apparent because their neighbor lists are relatively small}.
SO also eliminates the cost of launching many kernel functions.

SO performs best on enron, WatDiv and DBpedia, showing >5.7x speedup.
The reason is that write cache performs best on these graphs, thus saving lots of global memory store transactions (GST). 
On other graphs, the gain of write cache is small because they have fewer matches, thus perform fewer write operations.
\nop{
Another key reason for performance improvement is that we use write cache to save lots of global memory store transactions (GST). 
Table \ref{tab:GSI} shows that on some datasets (enron, WatDiv and DBpedia), the speedup is more than 5.7x, which is due to the reduced GST by write cache.
The speedup on gowalla and road is small because they have fewer matches, thus perform fewer writing operations and yield a small GST even without write cache.
}
\optionshow{
More details  are in our full paper \cite{fullVersion}.
}{}

\nop{
gives the comparison result of two kinds of set operations.
Clearly, the performance gain is extremely large when we utilize \emph{Shared Memory} with batch implementation.
Especially, on watdiv100M and dbpedia170M, the gld number is reduced by nearly 40\% and the reduction of query time is more than 60\%.
On watdiv100M, the query response time is even lowered by a magnitude.
The batch implementation is not very complicated but its efficiency demonstrates the significance of utilizing \emph{Shared Memory}.

We compare our batch implementation of set operations in Algorithm \ref{alg:joinTwo} with basic implementaion.
In the basic implementation, we do not use batches but follow the general ideas, i.e. read elements one by one.
In Line \ref{algcmd:subtract}, we read an element $x$ each time and use a loop to check if $x$ is already included in the partial answer.
In Line \ref{algcmd:intersect}, we do binary search on candidate set $C$ to check if some element exists.
We use binary search instead of merge-intersection here because $C$ is usually much larger then $buf_{i}$.
However, in Line \ref{algcmd:merge} we use merge-intersection because the two lists are both sorted and they have the same scale.

Table \ref{tab:setopPerf} gives the comparison result of two kinds of set operations.
Clearly, the performance gain is extremely large when we utilize \emph{Shared Memory} with batch implementation.
Especially, on watdiv100M and dbpedia170M, the gld number is reduced by nearly 40\% and the reduction of query time is more than 60\%.
On watdiv100M, the query response time is even lowered by a magnitude.
The batch implementation is not very complicated but its efficiency demonstrates the significance of utilizing \emph{Shared Memory}.
In this section we check the efficiency of write cache(proposed in Section \ref{sec:wrtc}) by comparing global memory store transactions(gst for short) and query response time.
The baseline is without write cache, which means GSI writes intermediate results to global memory directly each time no matter how large they are.
The result of experiments is placed in Table \ref{tab:wrtcPerf}.

Apparently, the write cache brings performance gains by reducing gst number.
On gowalla, the gst number is reduced by 14\% while on watdiv100M and dbpedia170M the number is lowered by a magnitude.
On watdiv100M, query time is reduced by nearly a half.
On dbpedia170M, time cost is even lowered by a magnitude.
These improvements proves the effectiveness of write cache, which is also the kind of optimization by utilizing \emph{Shared Memory}.
}

\subsection{Evaluating Optimization Techniques in GSI}
We evaluate the two optimization strategies proposed in Section \ref{sec:optimize} and give the result in Table \ref{tab:GSI-opt}, where column ``+DR'' is compared with column ``+LB''.
After adding the two optimizations, we denote the implementation as GSI-opt.
  
\subsubsection{Performance of Load Balance scheme}

The 4-layer balance scheme (LB) in Section \ref{sec:balance} does not save global memory transactions, or the amount of work.
However, it improves the performance by assigning workloads to GPU processors in a more balanced way.
We verify its efficiency by comparing it with the strategy used in \cite{DBLP:conf/ppopp/MerrillGG12}.
\optionshow{
The study of tuning parameters is in \cite{fullVersion}.
}{}
$W_{2}$ should be equal to the block size of CUDA (1024), and empirically we set $W_{1}=4W_{2}=16W_{3}=4096$.

On the four smaller datasets, LB does not show much advantage because the time cost is already very low (less than 0.6 seconds) and the load imbalance is slight.
But on other datasets, LB brings tremendous performance gain, i.e., more than 2.7x speedup.
This demonstrates that our strategy is especially useful on large scale-free graphs, due to the existence of severely skewed workloads.
\myhl{Note that though patent is scale-free, its maximum degree is small, which limits the effect of LB.}

\begin{table}[htbp]
		\caption{Performance of optimizations}
				\vspace{-0.1in}
	\begin{threeparttable}
	\small
	\centering
	\begin{tabular}{|c|r|r|r|r|r|}
		\hline
		Dataset$\backslash$Time(ms) & GSI & +LB\tnote{1} & speedup & +DR\tnote{2} & speedup \\
		\hline
		enron &  28 & 28 &  1.0x & 28 & 1.0x \\
		\hline
		gowalla & 69 & 69 & 1.0x  & 68 & 1.0x \\
		\hline
		patent & 524 & 466 & 1.1x & 465 & 1.0x    \\
		\hline 
		road & 456 & 456 & 1.0x & 456 & 1.0x  \\
		\hline
		WatDiv & 4.4K & 1.3K & 3.4x  & 1K & 1.3x \\
		\hline
		DBpedia & 6K & 2.2K & 2.7x  & 2K & 1.1x \\
		\hline
	\end{tabular}
\begin{tablenotes}
	\item[1] Add load balance techniques to GSI.
	\item[2] Add duplicate removal method to GSI + LB.
\end{tablenotes}
	\label{tab:GSI-opt}
\end{threeparttable}
\end{table}

\subsubsection{Performance of Duplicate Removal method}

Using the duplicate removal method (DR) in Section \ref{sec:dupr}, input is shared within a block so the amount of work should be reduced theorectically.
Compared with baseline (no duplicate removal), Table \ref{tab:GSI-opt} shows  1.3x and 1.1x speedup on WatDiv and DBpedia, respectively.
Besides, GLD is also lower with DR, but its comparison is omitted here.

This experiment shows that DR really works, though the improvement is small.
The bottleneck is the region size that DR works on, i.e., a block.
Even with the block size set to maximum (1024), DR can only remove duplicates within 32 rows since we use a warp for each row.

\optionshow{}{
\begin{table*}[htbp]
	\centering
	\caption{Performance of Duplicate Removal method}
	\begin{tabular}{|c|r|r|r|r|r|r|}
		\hline
		\multirow{2}*{Dataset} & \multicolumn{3}{|c|}{Global Memory Load Transactions} & \multicolumn{3}{|c|}{Query Response Time(ms)} \\ 
		\cline{2-7}
		~ & with duplicates & duplicate removal & drop & with duplicates & duplicate removal & drop  \\ 
		\hline
		enron & 656K & 630K & 4\% & 28 & 28 & 0\%  \\
		\hline
		gowalla & 848K & 824K & 3\% &  69 & 68 & 1\% \\
		\hline
		road & 1.66M & 1.61M & 3\% &  456 & 456 & 0\%  \\
		\hline
		WatDiv & 13M & 10M & 21\% &  1.3K & 1K &  17\%  \\
		\hline
		DBpedia & 14M & 10M & 23\% &  2.2K & 2K &  9\%  \\
		\hline
	\end{tabular}
\vspace{-0.15in}
	\label{tab:duprPerf}
\end{table*}
}

\subsection{Comparison of GSI with counterparts} \label{sec:counterpart}

\Paragraph{Overall Performance}.
The results are given in Figure \ref{fig:allTime}, where GSI and GSI-opt represent implementations without and with optimizations (in Section \ref{sec:optimize}), respectively.
Note that there is no bar  if the corresponding time exceeds the threshold of 100 seconds.
In all experiments, GPU solutions beat CPU solutions as expected due to the power of massive parallelism.

\myhl{Considering existing GPU solutions only, there is no clear winner between four counterparts, but they all fail to compete with GSI.}
GSI runs very fast on the first four datasets, answering queries within one second.
On WatDiv and DBpedia, GSI achieves more than 4x speedup over counterparts. 

Focusing on our solution, on the first four datasets, GSI-opt is close to GSI; while on the latter two, GSI-opt shows more than 3x speedup.
To sum up, our solution outperforms all counterparts on all datasets by several orders of magnitude.

%
%
%

\begin{figure}[htbp]   
	\centering
	\includegraphics[width=8cm]{\picfolder 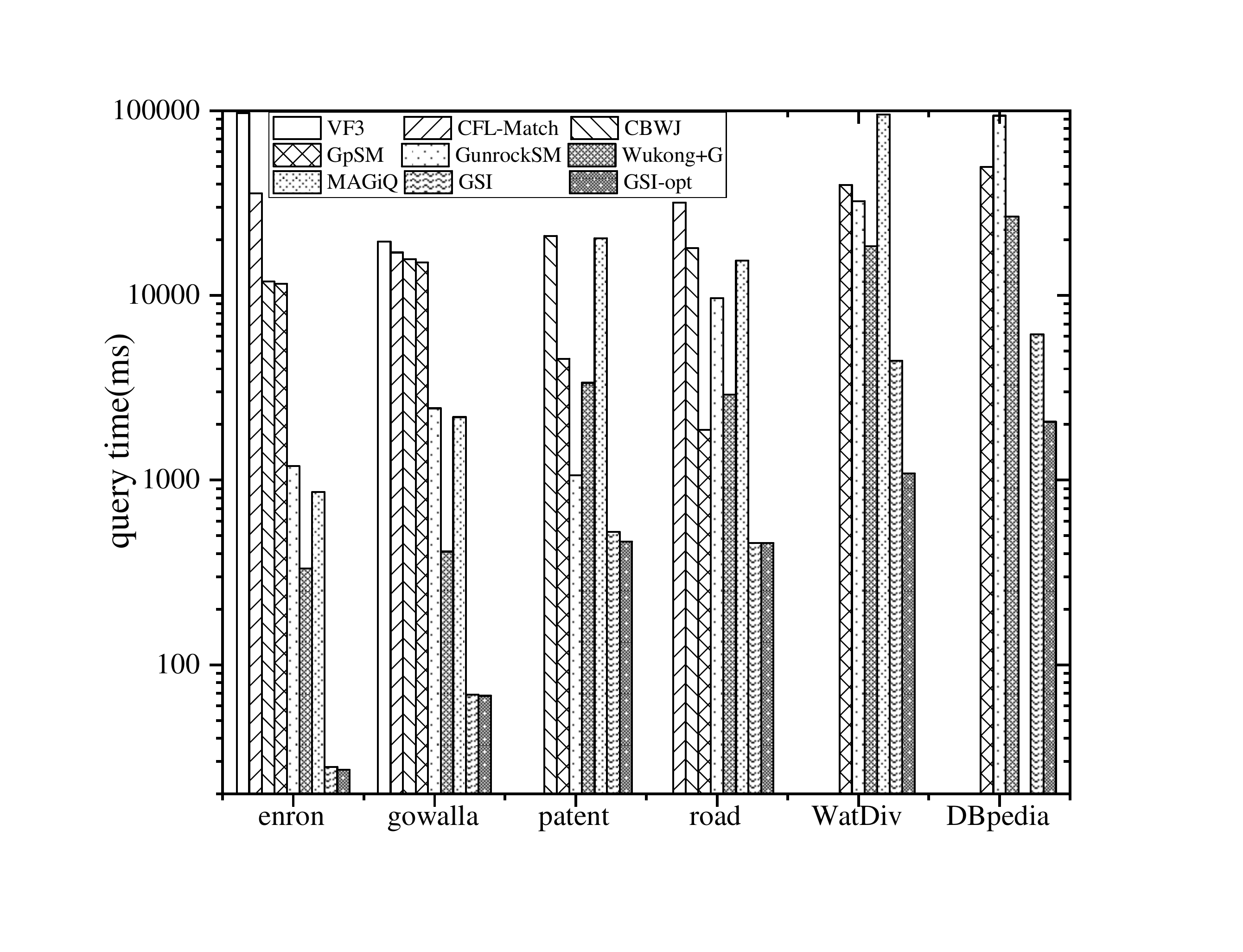}   
		\vspace{-0.1in}  
			\caption{\myhl{Performance Comparison on all datasets}}      	
			 	\label{fig:allTime}     	 
\end{figure}

\Paragraph{Scalability}.
We generate a series of RDF datasets using the WatDiv benchmark.
These scale-free graphs are named watdiv10M, watdiv20M,...,watdiv210M, with the number of vertices and edges growing linearly as the number in the name. CPU solutions fail to run even on the smallest watdiv10M dataset, thus, we only compare GPU solutions and draw the curves in Figure \ref{fig:scala}. 
\myhl{
Note that the curve stops if the memory consumption exceeds  GPU capacity or the time exceeds 100s.
The curves of four counterparts are  above the curves of GSI and GSI-opt.
Besides, they rise sharply as the data size grows larger.
In contrast, GSI-opt rises much more slowly. 
}

\nop{From Figure \ref{fig:scala}, we know that our solution (GSI-opt) can scale on a large graph with 210 million edges on WatDiv benchmark, which is much larger than the capability of existing solutions.}

\myhl{
After watdiv210M, all algorithms fail on most queries due to the limitation of GPU memory capacity.
The counterparts stop much earlier because they have larger candidate tables and intermediate tables.
Due to its efficient filter, GSI occupies less memory, thus scaling to larger graphs.
Furthermore, during each join iteration of GSI,  only an edge label-partitioned graph is needed on GPU.
In summary, GSI not only outperforms others by a significant margin, but also shows  good scalability so that graphs with hundreds of millions of edges for subgraph query problem are now tractable.
}

\begin{figure}[htbp]
	\centering	
	\subfigure[{\tiny \myhl{Scalability with the data size}}]
	{
		\label{fig:scala}
		\begin{minipage}[c]{4cm}		
			\centering		
			\includegraphics[width=4cm]{\picfolder 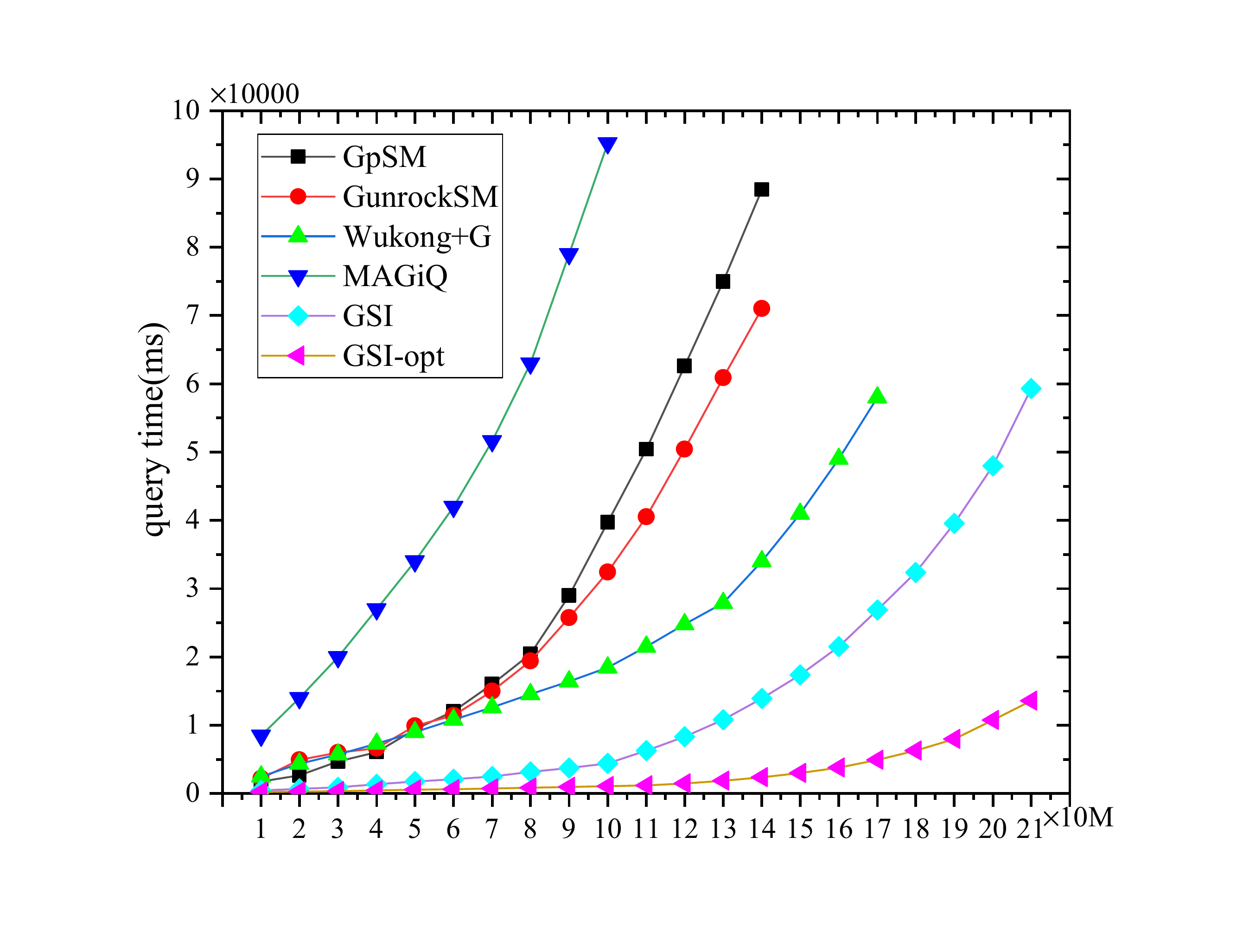}		
		\end{minipage}	
	}
	\subfigure[{\tiny \myhl{Scalability with the number of SMs}}]
{
	\label{fig:scalaSM}   
	\begin{minipage}[c]{4cm}		
		\centering		
		\includegraphics[width=4.2cm]{\picfolder 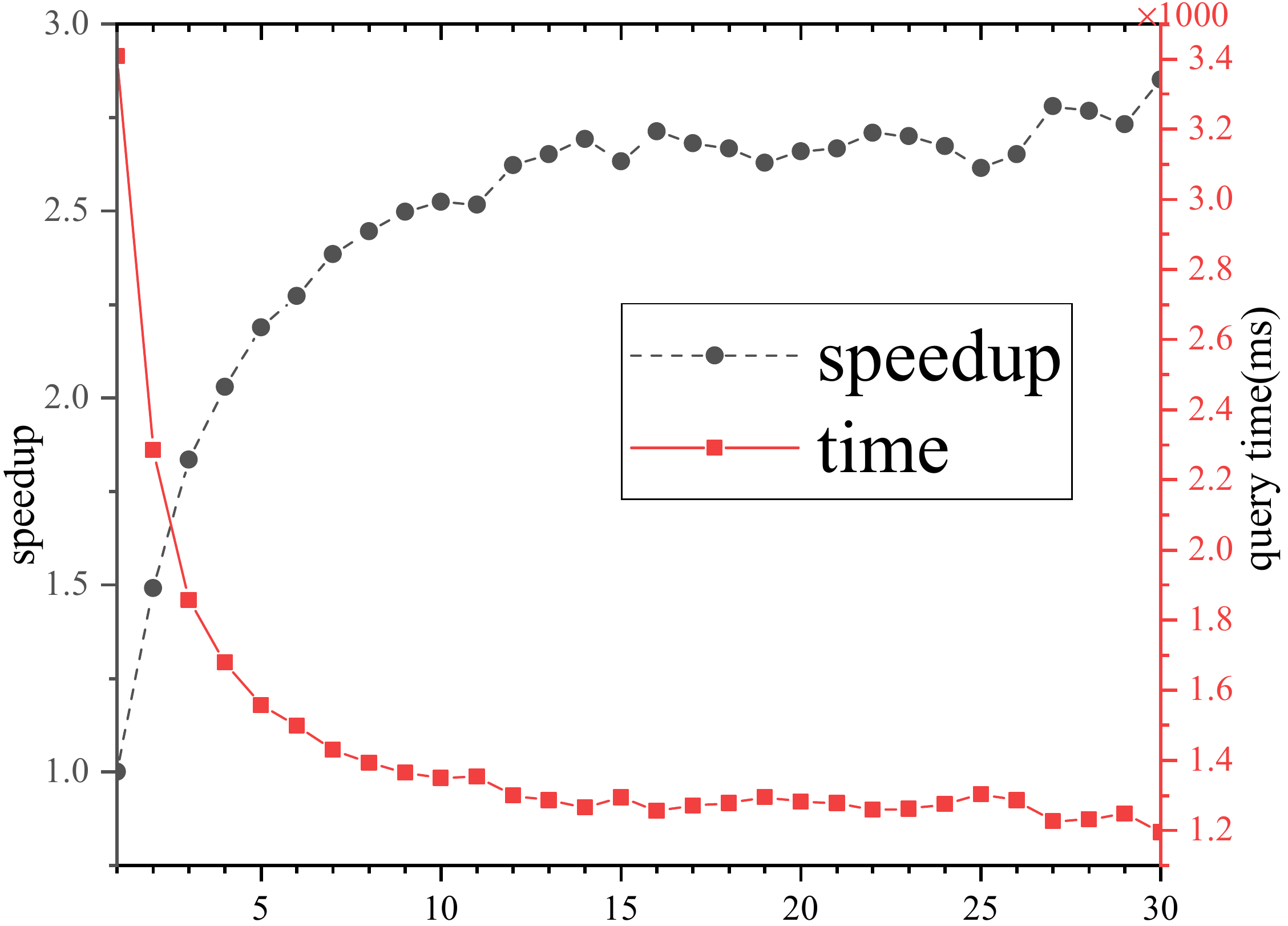}		
	\end{minipage}	
}
	\vspace{-0.1in}
	\caption{\myhl{Scalability Test}	}
\end{figure}

\myhl{
We also evaluate the scalability of GSI with the number of SMs.
In order to control the number of running SMs, we limit the number of blocks launched and withdraw dynamic kernels (see Section \ref{sec:balance}), which may degrade the performance.
We choose WatDiv (see Table \ref{tab:dataset}) and show the result in Figure \ref{fig:scalaSM}.
With the number of SMs increasing from 1 to 30, the response time drops continuously, though with some tiny fluctuations.
The time curve drops fast in the beginning, but slows down gradually, corresponding to the sub-linear speedup curve.
The maximum speedup is 2.85, and is limited by the irregularity of graphs and GPU memory bandwidth, which cause severe load imbalance and high memory latency.}

\subsection{Vary the number of labels and query size} \label{sec:lqsize}

In this section, we explore the influence of number of labels and query size.
We use GSI-opt and select gowalla as the benchmark.
By default, the number of vertex and edge labels are both 100, and all queries have 12 vertices.

We vary the number of labels and show results in Figure \ref{fig:label}.
As the number of labels increases, run time decreases.
The ``vertex label num'' line shows sharper drop because larger $|L_{V}|$ directly reduces the sizes of candidate sets.
However, after $|L_{V}|>100$, the drop quickly slows down to zero as candidate sets are small enough to be fully parallelized.
Similarly, larger $|L_{E}|$ also helps reduce $|C(u)|$ due to improved pruning power of labeled edges.
In addition, the size of $|N(v,l)|$ is also lowered as $|L_{E}|$ grows.
This is the reason that run time keeps dropping, though the speed also changes after $|L_{E}|>100$.

\begin{figure}[htbp]
	\centering
	\subfigure[{\tiny Vary the number of vertex and edge labels}]
	{
		\label{fig:label}   
		\begin{minipage}[c]{4cm}		
			\centering		
			\includegraphics[width=4cm]{\picfolder 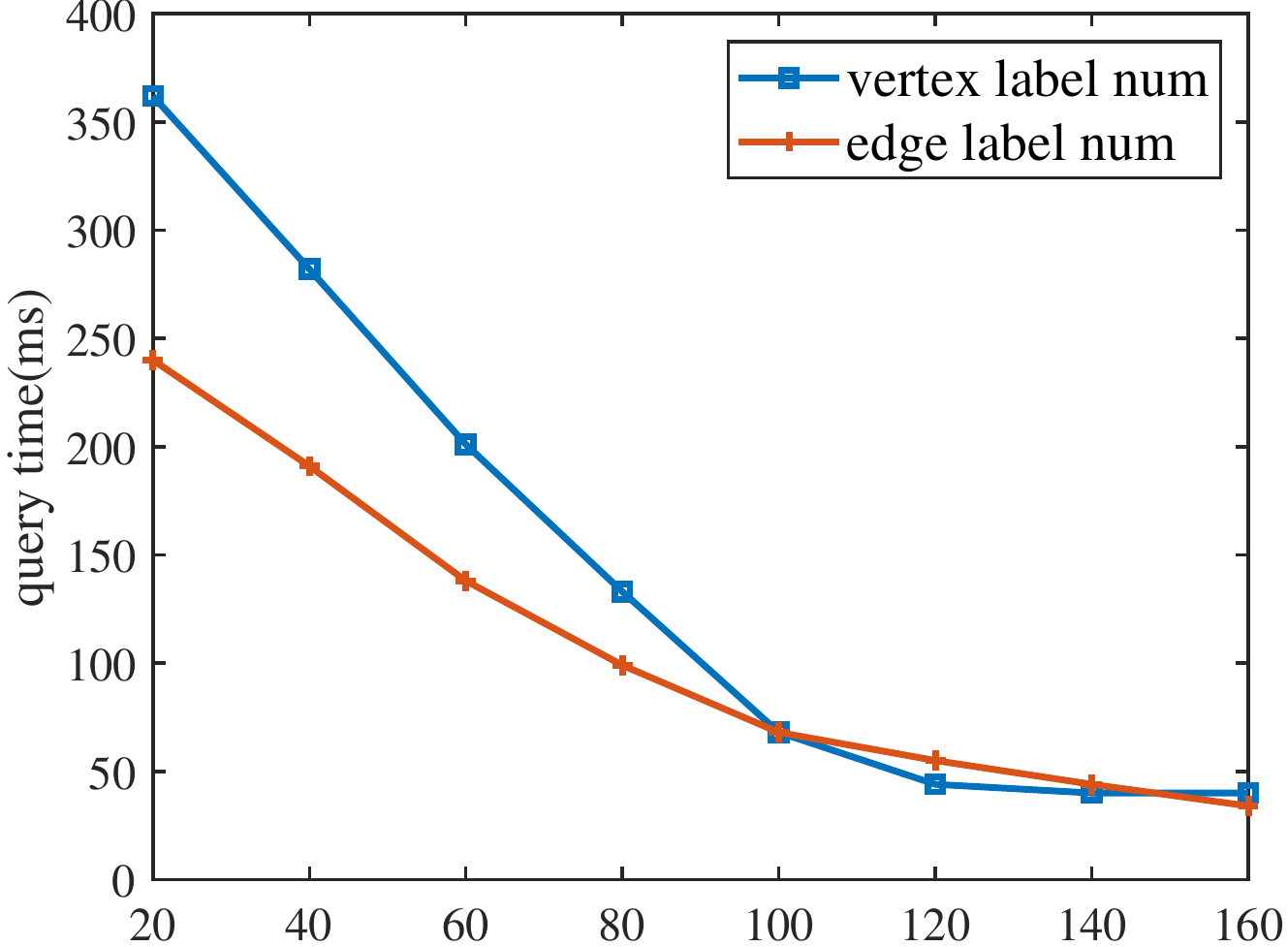}		
		\end{minipage}	
	}
	\subfigure[{\tiny Vary the number of edges and vertices in $Q$}]
	{
		\label{fig:qsize}
		\begin{minipage}[c]{4cm}		
			\centering		
			\includegraphics[width=4cm]{\picfolder 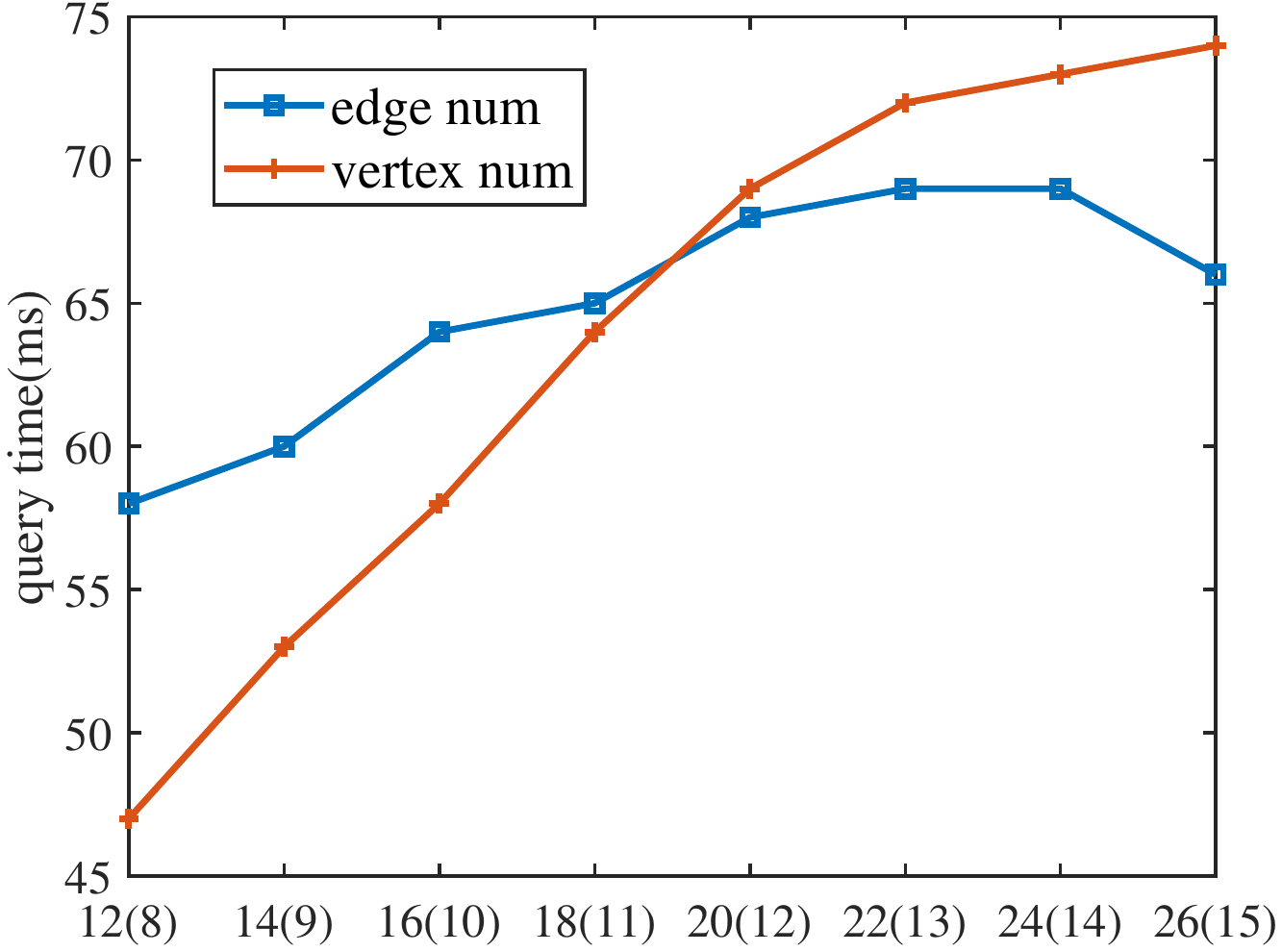}		
		\end{minipage}	
	}
	\vspace{-0.1in}
	\caption{Experiments of label number and query size}	
\end{figure}

As for query size, we first fix $|V(Q)|=12$ and vary the number of edges, then fix $|E(Q)|=2\times |V(Q)|$ and vary the number of vertices.
Figure \ref{fig:qsize} shows the result, where unenclosed X-axis numbers denote the number of edges, and the X-axis numbers enclosed in parentheses denote the number of vertices.
In the first case, run time rises slowly, because the cost of processing extra edges is marginal.
After $|E(Q)|>24$, a small drop occurs as there are enough edges to provide stronger pruning potential.
In the second case, an observable increase can be found because in our vertex-oriented join strategy, larger $|V(Q)|$ means more join iterations.
However, the rise slows down after $|V(Q)|\ge 13$.
Generally, larger query graph results in fewer matches, thus the cost of each join iteration is lower.

\subsection{Distribution of query time and result size} \label{sec:distribution}
\myhl{
In addition to average time, using GSI-opt, we also show the distribution of query  time and result size in Figure \ref{fig:distribution}.
}

\myhl{
For query time, the relative height of boxes corresponds to the results of average time in Figure \ref{fig:allTime}.
Due to the irregularity of graphs, a few outliers exist.
On WatDiv and DBpedia, the mean is above the major part (the box area) because the cost of outliers is too high.
}

\myhl{
For result size, gowalla, patent and road deliver the minimum value, which limits the speedup by write cache.
It must be noted that query time is not decided by result size.
Besides, the data skew on the latter two graphs are more severe, thus outliers are far above the box.
}

\begin{figure}[htbp]
	\centering
	\subfigure[{\tiny \myhl{Box plot of query response time} }]
	{
		\label{fig:timeBox}   
		\begin{minipage}[c]{4cm}		
			\centering		
			\includegraphics[width=4cm]{\picfolder 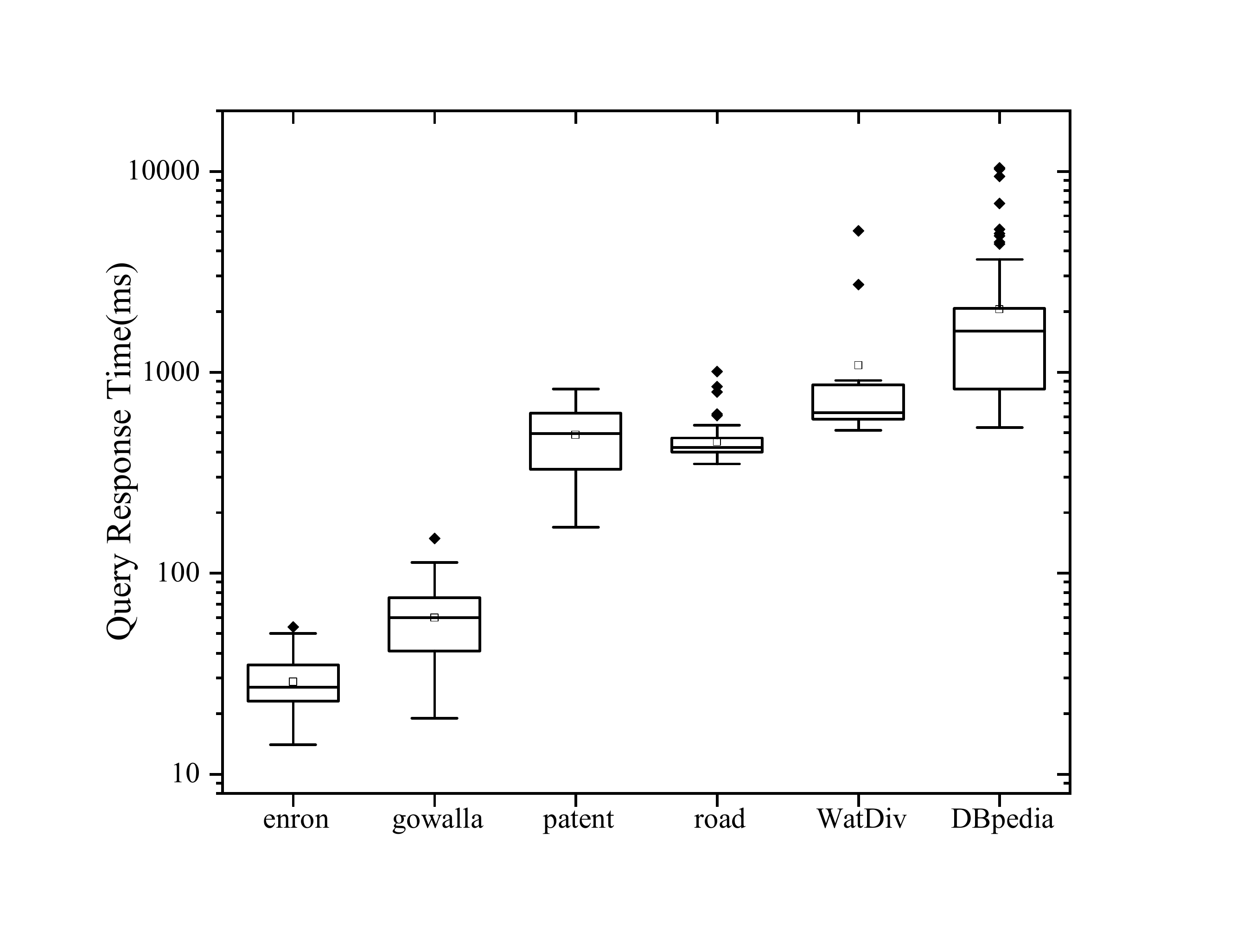}		
		\end{minipage}	
	}
	\subfigure[{\tiny \myhl{Box plot of query result size} } ]
	{
		\label{fig:sizeBox}
		\begin{minipage}[c]{4cm}		
			\centering		
			\includegraphics[width=4cm]{\picfolder 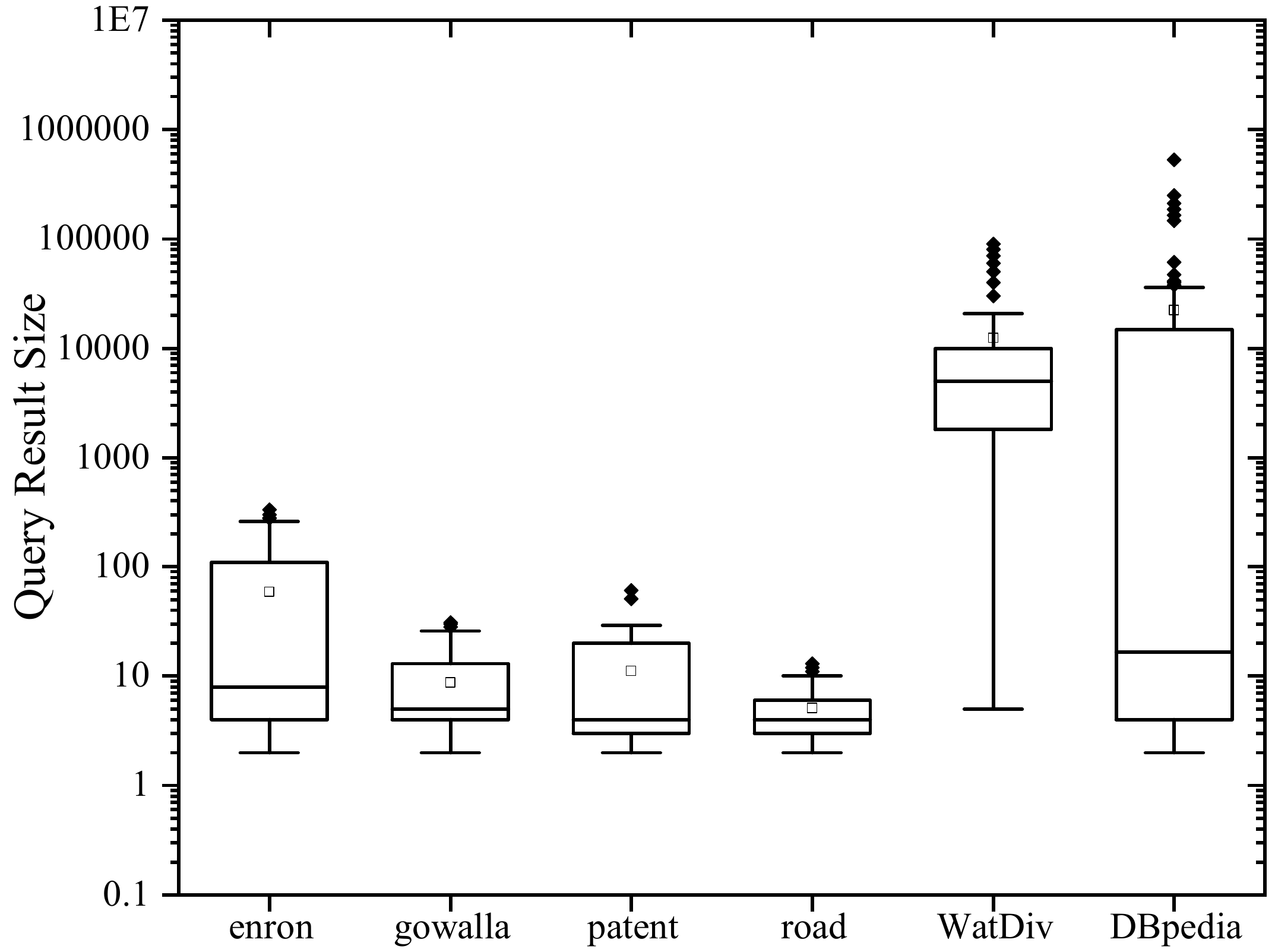}		
		\end{minipage}	
	}
	\vspace{-0.1in}
	\caption{\myhl{Distribution of query response time and query result size} }	
	\label{fig:distribution}
\end{figure}

\subsection{Additional Experiments} \label{sec:addexp}

\myhl{
We report the loading time (from host to GPU) of GSI on all datasets: 1ms, 5ms, 106ms, 120ms, 144ms, 178ms.
Besides, we record the maximum memory consumption of GPU algorithms in Table \ref{tab:gpumem}, including host and GPU memories.  
For CPU algorithms, only host memory consumption is reported in Table XI.
Note that ``NAN'' means an algorithm cannot end in a reasonable time.
Besides, CPU solutions all fail on WatDiv and DBpedia.
Obviously, backtracking solutions (VF3 and CFL-Match) occupy less memory.
}

\myhl{
GPU solutions that are based on BFS have larger memory consumption on both host and GPU.
Compared to counterparts,  GSI consumes more host memory due to the maintainance of signature table and PCSR structures.
However, GSI consumes less  GPU memory because it has smaller candidate/intermediate tables and in each iteration only an edge label-partitioned graph is needed on GPU.
}

\begin{table*}[htbp]
	\caption{\myhl{Memory consumption of GPU algorithms}}
	\vspace{-0.1in}
	\begin{threeparttable}
		\small
		\centering
		\begin{tabular}{|c|r|r|r|r|r|r|r|r|r|r|}
			\hline
			\multirow{2}*{Dataset} & \multicolumn{5}{|c|}{Host Memory} & \multicolumn{5}{|c|}{GPU Memory} \\ 
			\cline{2-11}
			~ & GpSM & GunrockSM & Wukong+G & MAGiQ & GSI & GpSM & GunrockSM & Wukong+G & MAGiQ & GSI   \\ 
			\hline
			enron & 154M &  181M & 174M & 284M & 160M   &  1.3G & 1.4G & 667M & 721M & 661M \\
			\hline
			gowalla & 592M & 712M & 599M & 367M &  466M  &  2.4G & 2.8G & 1.8G & 725M &  685M  \\
			\hline
			patent & 1.0G & 1.3G & 1.3G & 1.5G &  663M  & 3.5G  & 3.7G  & 2.1G & 1.1G &   915M  \\
			\hline
			road & 1.8G & 2.0G & 2.1G & 2.2G &  1.1G  & 3.6G  & 3.6G & 2.3G & 1.3G &  1.2G \\
			\hline
			WatDiv & 4.4G & 5.7G & 4.7G & 6.9G &  8.5G   &  7.3G & 7.7G & 7.5G & 5.6G &  4.9G  \\
			\hline
			DBpedia & 6.9G & 8.2G & 8G & NAN &  14G   &  9.0G & 9.6G & 9.6G & NAN &  8.1G  \\
			\hline
		\end{tabular}
		\label{tab:gpumem}
	\end{threeparttable}
	\vspace{-0.25in}
\end{table*}

\nop{
When answering a query $Q$ on $G$, we need to transfer  the signature table (for filtering) and the PCSR structures (for joining) to GPU memory.
The process above needs to be performed for each query, and its time cost is the loading time, which is included in the measurement.
The loading time  is small (<0.2s), and the details are given in Table \ref{tab:loadtime}.
}

\begin{table}[htbp]
	\begin{threeparttable}
		\caption{\myhl{Memory consumption of CPU algorithms}}
		\vspace{-0.15in}
		\begin{tabular}{|c|r|r|r|r|}
			\hline
			Algorithm$\backslash$Dataset & enron & gowalla & patent & road   \\
			\hline
			VF3  & 10M & 39M &  NAN & NAN      \\
			\hline
			CFL-Match & 21M & 62M  &  NAN & 416M   \\
			\hline
			CBWJ & 15M & 64M  &  817M & 1.5G   \\
			\hline
		\end{tabular}
	\end{threeparttable}
	\label{tab:cpumem}
\end{table}

\nop{
Data graph is loaded into memory first and preprocessed to build encoding table and PCSR structures, after which the queries are read one by one and executed.
For each query, we begin counting the time after the query graph is read into memory and end when all results are flushed to a disk file.
We report error if some algorithm fails to answer a query within 100 seconds.
Time of building encoding table and PCSR structures is not listed due to limit of pages.
On the other hand, preprocessing is done offline and consumes several hundreds of seconds at most in our experiments.
}

\nop{
\subsection{Comparison with CBWJ} \label{sec:cbwj}

\myhl{CBWJ \cite{CBWJ}}

the state-of-the-art CPU algorithm of  WOJ, i.e., CBWJ [32]. 
However, the code given by the author of CBWJ fails to run our datasets due to large number of edge labels.
Thus, we re-generate datasets with small number of edge labels to finish this comparison.
Results are given in full version [19].
Thus, for  subgraph query, we compare the non-relational approaches (CFL-Match and VF3) and the relational approach (CBWJ) in Figure ??. 
}

\section{Related Work}\label{sec:related}

\Paragraph{CPU-based subgraph isomorphism}.
Ullmann \cite{DBLP:journals/jacm/Ullmann76} and VF2 \cite{DBLP:journals/pami/CordellaFSV04} are the two early efforts; Ullmann uses depth-first search strategy, while VF2 considers the connectivity as pruning strategy.
Most later methods (e.g., \cite{DBLP:journals/pvldb/ZhaoH10,  DBLP:journals/pvldb/ShangZLY08}) pre-compute some structural indices to reduce the search space and optimize the matching order using various heuristic methods.
TurboISO \cite{DBLP:conf/sigmod/HanLL13} merges similar query nodes and BoostISO \cite{DBLP:journals/pvldb/RenW15} extends this idea to data graph.
CFL-Match \cite{CFL-Match} defines a Core-Forest-Leaf decomposition  and select the matching order based on minimal growth of intermediate table.
VF3 \cite{VF3} is an improvement of VF2, which leverages more pruning rules (node classification, matching order, etc.) and favors dense queries. \myhl{
EmptyHeaded \cite{EmptyHeaded} and CBWJ \cite{CBWJ} are based on worst-case optimal join \cite{WOJ}; CBWJ achieves better performance.
}
Unfortunately, these sequential solutions perform terrible on large graphs, due to exponential search space.

\Paragraph{GPU-based subgraph isomorphism}.
The first work  is  \cite{DBLP:conf/adc/LinZWWQ14}, which finds candidates for STwigs \cite{trinity} first and joins them.
However, STwig-based framework  may not be suitable for GPU due to large intermediate results.
Later, GPUSI \cite{yang2015gpu} transplants TurboISO to GPU.
Different candidate regions are searched in parallel, but its performance is limited by depth-first search within each region.
All backtracking-based GPU algorithms have problems of warp divergence and uncoalesced memory access, as analyzed in \cite{DBLP:conf/europar/JenkinsAOCS11}.

GpSM \cite{GpSM} and GunrockSM \cite{GunrockSM}  outperform previous works by leveraging \emph{breadth-first search}, which favors parallelism.
Their routines are already introduced in Section \ref{sec:introduction}.
They both adopt two-step output scheme to write join results, and do not utilize features of GPU architecture.
Therefore, they have problems of high volume of work, long latency of memory access and severe workload imbalance.
In summary, GpSM and GunrockSM both lack optimizations for challenges presented in Section \ref{sec:challenge}.

\myhl{
MAGiQ \cite{MAGiQ} and Wukong+G \cite{Wukong+G} are two GPU-based RDF systems that supports SPARQL queries.
Wukong+G develops efficient swapping mechanism between CPU and GPU, while MAGiQ utilizes existing CUDA libraries of linear algebra for filtering.
They have no optimization for table join, which marks them inefficient. 
}

\nop{
Existing work realted to subgraph isomorphism can be mainly divided into two categories: CPU-based and GPU-based.
Algorithms on CPU face great challenge of efficiency and scalability when dealing with very large graphs. 
As a result, subgraph matching is a bottleneck for overall performance in many applications. 

In order to yield better performance, several approaches are explored in the literature.
One way is to develop more efficient pruning strategy and select better matching order \cite{DBLP:journals/jacm/Ullmann76,DBLP:journals/pami/CordellaFSV04,DBLP:journals/pvldb/ZhaoH10,DBLP:conf/dasfaa/ZhuZLZW10, DBLP:journals/pvldb/LeeHKL12,DBLP:conf/sigmod/HanLL13,DBLP:journals/pvldb/KimSHHC15,DBLP:journals/pvldb/RenW15,DBLP:journals/jsc/McKayP14,DBLP:journals/pvldb/ShangZLY08,DBLP:conf/sigmod/BiCLQZ16}.
This will bring some improvements, but for graphs with tens of millions of nodes, it is not good enough.
Another approach is to develop distributed systems \cite{gStoreD, trinity}, which eases the burden of computation of a single machine.
However, data placement is a big challenge in distributed systems, as the communication cost among different machines may be too high to carry on.
The final approach is using new hardwares with power of massive parallism like GPU.
Recently, GPUs have been successfully leveraged for fundamental graph operations(like BFS  \cite{DBLP:conf/ppopp/MerrillGG12, DBLP:conf/sc/BeamerAP12, DBLP:conf/sc/LiuH15, DBLP:conf/sigmod/LiuHH16, DBLP:conf/ipps/PanPO18} and SSSP \cite{gpuBellmanFord, gpuSSSP}) on large graphs.

\subsection{CPU-based subgraph isomorphism algorithms}

Ullmann \cite{DBLP:journals/jacm/Ullmann76} and VF2 \cite{DBLP:journals/pami/CordellaFSV04} are the two early efforts for subgraph isomorphism problem, and the key problem is how to prune invalid matches as early as possible.
Ullmann uses depth-first search strategy, while VF2 considers the connectivity as pruning strategy.
In order to speed up query response time, most subgraph search methods \cite{DBLP:journals/pvldb/ZhaoH10, dasfaa:nova, ssdbm:aep, DBLP:journals/pvldb/ShangZLY08} pre-compute some structural indices to reduce the search space and optimize the matching order using various heuristic methods.
They assume that the data graph is a vertex-labeled graph, and build indices based on vertex labels and neighborhood structure.

Jinsoo Lee et al. \cite{DBLP:journals/pvldb/LeeHKL12} re-implements some of the above methods and provides a fair comparison of algorithms above. 
Then they present TurboISO \cite{DBLP:conf/sigmod/HanLL13}, which merges similar query vertices and enumerates all paths to find the best matching order.
BoostISO \cite{DBLP:journals/pvldb/RenW15} extends the concept of neighborhood equivalence class in the data graph and defines four types of relationships between vertices in the data graph to further reduce duplicate computation.
CFL-Match \cite{CFL-Match} defines a Core-Forest-Leaf decomposition strategy and a cost model to select a good matching order based on minimal growth of result size.
VF3 \cite{VF3} is an improvement of VF2, which leverages more pruning rules (node classification, match order, etc.) and favors dense queries.
These CPU algorithms only do experiments on real graphs with thousands of nodes, which are too small in real world.
In addition, TurboISO and CFL-Match consume much more memory, which limits their scalability.

\subsection{GPU-based subgraph isomorphism algorithms}


The first work of subgraph isomorphism on GPU is done in \cite{DBLP:conf/adc/LinZWWQ14}, which finds candidates for STwigs \cite{trinity} first and joins these candidates to get the final result.
It uses hash method instead of binary search to join two relation tables of STwigs, indicating a constant time cost.
However, hash methods always bring heavy issue of random memory access which limits the performance.
In addition, framework based on STwigs may not fit for GPU due to large intermediate results.

Later, GPUSI \cite{yang2015gpu} modifies TurboISO to fit for GPU.
Search processes in different candidate regions are run in parallel.
However, the whole framework is still based on backtracking paradigm and the overall performance is limited by depth-first search.


\Paragraph{GpSM}.
GpSM \cite{GpSM} outperforms previous works by leveraging breadth-first search to explore search space, which favors parallelism.
It collects candidates for query vertices and query edges first, later uses a refining process to reduce the number of candidates by the neighborhood restrictions.
Next, candidate edges are joined one by one following a BSP model.
In each step, intermediate table is joined with the candidate set of a query edge.
GpSM uses a warp for each row and adopts \emph{two-step output scheme} \cite{DBLP:conf/IEEEpact/HeFLGW08} to concurrently write matched results to a new intermediate table.
Shortcomings of GpSM can be summarized in three aspects:
\begin{itemize}
\item Two-step output scheme is used when collecting edge candidates and joining intermediate results. 
With this scheme, the join process is done twice in order to find the writing address of each result.
Double work amount marks it inefficient.
\item Caches are not fully utilized. \emph{Shared Memory} is only used to cache the content of intermediate table in a trival way. 
Each block can deals with 32 rows at most and GpSM only uses a tiny part(<1KB) of shared memory.
The size of shared memory on modern GPU is usually 48KB or larger, but GpSM does not utilize them.
\item Though GpSM uses a warp for each row, load imbalance still exists in the candidate collecting phase and the joining phase. 
For example, the workloads of different blocks is not balanced.
\end{itemize}

\Paragraph{GunrockSM}.
GunrockSM \cite{GunrockSM} is based on \cite{gunrock} and handles the subgraph isomorphism problem as follows:
\begin{enumerate}
\item filter by label and degree to collect candidates for each query node 
\item find the candidate set for each query edge
\item enumerate all combinations of all query edges to acquire the final result
\end{enumerate}
In the last two steps, GunrockSM also uses \emph{two-step output scheme}.
Due to the problem of uncertain result number, GunrockSM uses a big array to store the statuses of all combinations.
Later it does all enumerations again and adds valid matches to result table according to the status array.
Brute-force enumeration is the bottleneck, limiting the algorithm in two aspects.
On the one hand, many invalid matches are considered and lots of threads are wasted.
On the other hand, the size of status array may exceed the hardware limit.
Besides, GunrockSM does not adapt to the features of GPU very well.
Load imbalance occurs in the second step due to varied lengths of adjacent lists.
Furthermore, \emph{Shared Memory} is not leveraged to reduce memory latency.

To sum up, GpSM and GunrockSM both lack optimizations for challenges presented in Section \ref{sec:challenge}.
GSI designs novel and efficient methodologies to handle these problem, thus outperforms the literature by several orders of magnitude.
}

\section{Conclusions}\label{sec:conclusion}

We introduce an efficient algorithm (GSI),  utilizing GPU parallelism for large-scale subgraph isomorphism.
GSI is based on filtering-and-joining framework and optimized for the architecture of modern GPUs.
Experiments show that GSI outperforms all counterparts by several orders of magnitude. 
Furthermore, all pattern matching algorithms using $N(v,l)$ extraction can benefit from the \emph{PCSR} structure.
The \emph{Prealloc-Combine} strategy also sheds new light on join optimization.

\nop{
The contribution of this work lies not only in the high-performance subgraph isomorphism algorithm, but also in the lessons we learn for optimizing graph algorithms on GPU.
Firstly, all pattern matching algorithms(e.g., frequent mining) using $N(v,l)$ can benefit from the \emph{PCSR} structure.
Secondly, \emph{Prealloc-Combine} strategy opens new door to joining process, which is superior to traditional two-step output scheme.
Thirdly, other strategies such as write cache and load balance scheme also shed new light on the optimization of highly irregular graph algorithms on GPU.
In future, we will further improve the efficiency of GSI, support extremely large graphs which can not be fully placed on GPU and develop more practical GPU graph algorithms.
}






\bibliographystyle{IEEEtran}
\bibliography{gsi}  




\end{document}